\documentclass[12pt]{article}

\usepackage{amsmath,amsfonts,amssymb,amsthm}
\usepackage{mathtools}
\usepackage{mathrsfs}
\usepackage{sgame}
\usepackage{float}
\usepackage{xcolor}
\usepackage{tikz}
\usetikzlibrary{automata, positioning, arrows, arrows.meta}
\usepackage{tikz-cd}
\usepackage{setspace}
\usepackage{geometry}
\usepackage{natbib}
\usepackage{authblk}
\usepackage[colorlinks,citecolor=blue,urlcolor=blue]{hyperref}
\usepackage[T1]{fontenc}
\newcommand\nodeab{\textcolor{red}{ab}}
\newcommand\nodeb{\textcolor{violet}{b}}
\newcommand\nodea{\textcolor{brown}{a}}

\numberwithin{equation}{section}
\theoremstyle{plain}
\newtheorem{theorem}{Theorem}[section]
\newtheorem{lemma}{Lemma}[section]

\newtheorem{claim}{Claim}[section]

\newtheorem{definition}{Definition}[section]
\theoremstyle{definition}

\newcommand{\ICR}{{\rm ICR}}
\newcommand{\BR}{{\rm br}}
\newcommand\naturals{\mathbb N}
\newcommand{\marg}{\text{marg}}
\newcommand{\proj}{\operatorname{proj}}
\newcommand{\id}{\text{id}}

\begin{document}\onehalfspacing
\title{Strategic Type Spaces\thanks{The authors are grateful to Stephen Morris, Muhamet Yildiz, Satoru Takahashi, Daisuke Oyama, Marcin P{\c e}ski, Michael Greinecker, and participants in the Transatlantic Theory Workshop, the Stony Brook Conference in Game Theory, the Institut Henri Poincar\'e Game Theory Seminar, the PSE TOM Seminar, the MIT Theory Lunch, and the One World Mathematical Game Theory Seminar for useful comments. Olivier Gossner acknowledges support from the French National Research Agency (ANR), ``Investissements d'Avenir'' (ANR-11-IDEX-0003/LabEx Ecodec/ANR-11-LABX-0047), UKRI, and the European Union (SInfoNiA ERC-2023-ADG 101142530).}}

\author{Olivier Gossner\thanks{CNRS--\'Ecole Polytechnique and London School of Economics} \quad Rafael Veiel\thanks{University of Texas at Austin}}

\maketitle

\begin{abstract}
We study how much information about a Harsanyi type is needed to determine its iterated best replies. For any finite incomplete-information game, we define Strategic Quotients, which retain only the finite-order descriptions needed for this computation. We show that there is a unique smallest such representation, the Strategic Type Space (STS). Its types are exactly interim correlated rationalizability hierarchies, and each hierarchy is rationalized by a belief over nature and opponents' strategic types. For each player, these hierarchies are the label sequences generated by a finite automaton.
\end{abstract}

\section{Introduction}

As shown by \cite{harsanyi1967games}, a player facing uncertainty about payoffs cares about the payoff-relevant state of nature, but also about the actions of her opponents. Since opponents' actions depend on their beliefs about nature, the player's beliefs about others' beliefs about nature are strategically relevant. In turn, opponents' choices are influenced by their beliefs about beliefs, so the player's beliefs about those higher-order beliefs are strategically relevant as well, and so on.

Harsanyi type spaces encode this recursive structure by associating each type with a belief over nature and the other players' types. The universal type space of \cite{mertens1985formulation} represents the resulting hierarchies of beliefs. Because it is game-independent, it retains every distinction that may matter in some game. This generality comes with a complexity cost. Even when nature has only two states, first-order beliefs form an interval such as $[0,1]$, second-order beliefs are probability distributions over nature and that interval, and higher-order beliefs iterate this construction. Eliciting higher-order beliefs and designing information on the universal type space are therefore difficult.

We instead fix a game and retain only distinctions that affect strategic behavior in that game. This permits simpler, game-dependent descriptions of information.

We call such a reduced representation a \emph{type space quotient}. A reduced type summarizes a class of beliefs, and every belief over nature and opponents' reduced types is assigned a reduced type. We focus on quotients that retain enough information to compute iterated best replies. At each finite order, they record the best replies to beliefs over opponents' preceding descriptions; consistency across orders gives a strategic-description tower. A \emph{Strategic Quotient} is a type space quotient admitting such a tower.

We define the Strategic Type Space (STS) as the unique smallest Strategic Quotient: every Strategic Quotient has a unique belief-preserving reduction to it. We prove that the STS exists and that its types are exactly ICR hierarchies. Each hierarchy is rationalized by a belief over nature and opponents' ICR hierarchies.

Our main result shows that, for every finite game, the STS is generated by a finite automaton. Universal belief hierarchies become more complex at each order, whereas the successive strategic descriptions of a fixed finite game obey a finite recursion.

Recursive arguments already underlie strategic contagion in binary-action supermodular games with action-dominant states, including the Email game of \cite{rubinstein1989electronic}, the robustness analysis of \cite{oyama2020generalized}, and the information-design analysis of \cite{morris22implementation}. Our construction applies to every finite game. The examples in Section \ref{sec:GlobalGames} illustrate how the recursion depends on the game's payoffs.

The remainder of the paper is organized as follows. Section~\ref{Litt} relates our construction to the literature on rationalizability and universal type spaces. Section~\ref{sec:notation} introduces the notation. Section~\ref{sec:ICR} defines interim correlated best replies, type space quotients, and strategic-description towers; it then constructs the STS as the space of ICR hierarchies and proves its existence and uniqueness. Section~\ref{sec:FiniteSTS} proves that the continuation state space is finite, derives the finite-automaton representation, and illustrates it in two binary-action examples. Section~\ref{sec:Dis} discusses other solution concepts, common-prior models, and possible extensions. The appendix contains the proofs omitted from the main text, including the finite-state argument.

\section{Related Literature}\label{Litt}

\paragraph{Rationalizability and Information} Rationalizability was introduced by \cite{bernheim1984rationalizable, pearce1984rationalizable} for complete-information games. \cite{dekel2007interim} extend it to incomplete-information games through interim correlated rationalizability (ICR). For each type, ICR repeatedly removes actions that are not best replies under any state-contingent, correlated belief over opponents' types and actions compatible with that type's belief. \cite{dekel2007interim} show that two Harsanyi types have the same ICR actions in every game if and only if they induce the same hierarchy of beliefs. They also show that the actions surviving round $m$ are measurable with respect to the type's order-$m$ beliefs. This result motivates our coarser, game-dependent representation.

Several papers relate information coarsenings to ICR. \cite{chen2016characterizing} define frames, which are finite partitions of belief hierarchies that respect the types' belief structure and are therefore special cases of type space quotients. \cite{chen2014robust} study collections in the universal type space that are closed under best replies. \cite{chenrefine} provide an algorithm for computing ICR hierarchies and use it to study refinements. We instead fix a game and construct its smallest representation of strategically relevant information without starting from a particular Harsanyi type space. Our finite-state recursion for ICR is new.

\paragraph{Universal Type Spaces} \cite{mertens1985formulation} construct the universal type space recursively from finite-order beliefs, using topology to extend complete belief hierarchies. \cite{ely2004hierarchies} give a related strategic characterization for a version of interim rationalizability that restricts correlation.\footnote{See Section \ref{sec:Dis}.} Both constructions concern information that may matter in any game.

Our construction, like the purely measurable universal space of \cite{HeifetzSamet1998}, imposes no topological assumptions on the input Harsanyi type spaces. General measurable state spaces raise extension questions that depend on the chosen definition of a belief hierarchy.\footnote{\cite{Pinter2012} defines a particular purely measurable hierarchy space and asserts the required inverse-limit property for every hierarchy in that space and provides its representation in the complete universal type space but distinguishes this hierarchy construction from the one used in the Heifetz--Samet counterexample.}  The simplification we obtain by fixing the game is a countable space of strategic hierarchies with finitely many continuation states.

Every finite best-reply hierarchy extends to a complete hierarchy, and every complete hierarchy is rationalized by at least one belief over nature and strategic types.

\section{Preliminaries and Notation}\label{sec:notation}
All spaces are measurable, and products carry the product $\sigma$-algebra. For a family $(X_i)_i$, write $X\coloneqq\prod_iX_i$ and $X_{-i}\coloneqq\prod_{j\neq i}X_j$. For maps $f_i\colon X_i\to Y_i$, write $f=(f_i)_i$ and $f_{-i}=(f_j)_{j\neq i}$ for the corresponding product maps. We write $(\id_K,f_{-i})$ for the map $(k,x_{-i})\mapsto(k,f_{-i}(x_{-i}))$, where $\id_K$ is the identity on $K$.

For a sequence $x_i=(x_{i,n})_n$, let $\proj_n(x_i)=x_{i,n}$, and let $\proj_{-i,n}$ be the corresponding product map, so $\proj_{-i,n}(x_{-i})=(x_{j,n})_{j\neq i}$. The cardinality of the underlying set of $X$ is denoted by $|X|$.
 
The set of probability measures on $X$ is denoted by $\Delta_X$ and carries the cylinder $\sigma$-algebra.\footnote{This is the $\sigma$-algebra generated by the sets $\{p\in\Delta_X:p(E)\in[a,b]\}$ for measurable $E\subseteq X$ and $0\leq a\leq b\leq1$.} For $Z=Y\times\prod_{m\in M}X_m$, where $M$ is at most countable, $\marg_{Y,m}(p)$ denotes the marginal of $p\in\Delta_Z$ on $Y\times X_m$. For a measurable map $f\colon X\to Y$, $p\circ f^{-1}$ denotes the distribution induced by $f$. For countable spaces we omit measurability qualifications, and $\operatorname{supp}p$ denotes the support of a countably supported distribution.

In diagrams, an arrow from $X$ to $Y$ on the subscripts denotes the induced map from $\Delta_X$ to $\Delta_Y$:
\begin{center}
\begin{tikzcd} 
 \Delta_{X}     \arrow[shift left=1ex]{d}[swap]{}     \\
\Delta_{Y}
 \end{tikzcd}
\end{center} 

Double-headed arrows denote onto maps. The finite set of players is $I$, the finite state space is $K$, the finite action sets are $(A_i)_{i\in I}$, and payoffs are given by $u\colon A\times K\to\mathbb R^I$.

\section{Strategic Type Spaces} \label{sec:ICR}

This section introduces strategic type spaces. Subsection \ref{sec:ICR_def} defines the best-reply correspondence underlying interim correlated rationalizability. Subsection \ref{sec:Axioms} defines Strategic Quotients and the smallest such quotient, the STS. Subsection \ref{sec:Charact} constructs the STS as the space of best-reply hierarchies and establishes existence and uniqueness. Section \ref{sec:FiniteSTS} gives its finite-automaton representation.

\subsection{Interim Correlated Best Replies}\label{sec:ICR_def}

A \emph{Harsanyi type space} $\mathscr{H}$ consists of a family of measurable spaces $(T_i)_{i\in I}$ and measurable maps $\pi_i\colon T_i\to\Delta_{K\times T_{-i}}$, where $\pi_i(t_i)$ represents type $t_i$'s belief over the state and the other players' types.

Let $\mathcal B_i\coloneqq 2^{A_i}\setminus\{\varnothing\}$. A belief $p\in\Delta_{K\times\mathcal B_{-i}}$ specifies the state and an available action set for each opponent. A compatible joint distribution additionally selects an action from every announced set. Formally, let $\mathcal Q_i(p)$ be the set of \emph{compatible joint distributions} $q\in\Delta_{K\times\mathcal B_{-i}\times A_{-i}}$ satisfying
$$
\marg_{K,\mathcal B_{-i}}q=p
\quad\text{and}\quad
q\bigl(\{(k,b_{-i},a_{-i}):a_{-i}\in b_{-i}\}\bigr)=1.
$$
Player $i$'s best-reply correspondence $\BR_i\colon\Delta_{K\times\mathcal B_{-i}}\to\mathcal B_i$ is
$$
\BR_i(p)\coloneqq
\bigcup_{q\in\mathcal Q_i(p)}
\arg\max_{a_i\in A_i}\mathbb E_q\bigl[u_i(a_i,a_{-i},k)\bigr].
$$

As in \cite{dekel2007interim}, ICR on a Harsanyi type space is defined recursively by
$$
\ICR_i^0(t_i)\coloneqq A_i,
\qquad
\ICR_i^m(t_i)\coloneqq
\BR_i\left(\pi_i(t_i)\circ(\id_K,\ICR_{-i}^{m-1})^{-1}\right),
\quad m\geq1.
$$
The sequence $(\ICR_i^m(t_i))_{m\geq0}$ is the \emph{\ICR-hierarchy} of $t_i$, and $\ICR_i(t_i)\coloneqq\bigcap_m\ICR_i^m(t_i)$ is its ICR set.

We now state the monotonicity property of $\BR$ used throughout the paper. Beliefs $p_i^-,p_i^+\in\Delta_{K\times\mathcal B_{-i}}$ admit a \emph{nested coupling} if there is a distribution $\lambda_i\in\Delta_{K\times\mathcal B_{-i}\times\mathcal B_{-i}}$ whose marginals on $(k,b_{-i}^-)$ and $(k,b_{-i}^+)$ are respectively $p_i^-$ and $p_i^+$ and such that
$$
\lambda_i\bigl(\{(k,b_{-i}^-,b_{-i}^+):b_j^-\subseteq b_j^+
\text{ for every }j\neq i\}\bigr)=1.
$$

\begin{lemma}[Monotonicity of $\BR$]\label{MonoBR}
If $p_i^-$ and $p_i^+$ admit a nested coupling, then
$$
\BR_i(p_i^-)\subseteq\BR_i(p_i^+).
$$
\end{lemma}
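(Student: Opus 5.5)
The plan is to transport any compatible joint distribution for $p_i^-$ along the nested coupling, producing a compatible joint distribution for $p_i^+$ with the same marginal on $K\times A_{-i}$. Since the expected payoff $\mathbb E_q[u_i(a_i,a_{-i},k)]$ depends on $q$ only through its marginal on $K\times A_{-i}$, the argmax sets then coincide, and the inclusion $\BR_i(p_i^-)\subseteq\BR_i(p_i^+)$ follows directly from the definition of $\BR_i$ as a union over $\mathcal Q_i(\cdot)$.

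Concretely, fix $a_i\in\BR_i(p_i^-)$ and pick $q^-\in\mathcal Q_i(p_i^-)$ with $a_i\in\arg\max\mathbb E_{q^-}[u_i]$. Let $\lambda_i$ be the nested coupling. All spaces are finite, so disintegration is elementary. For each $(k,b^-_{-i})$ with $p_i^-(k,b^-_{-i})>0$, let $\lambda_i(\cdot\mid k,b^-_{-i})$ be the conditional law of $b^+_{-i}$, and let $q^-(\cdot\mid k,b^-_{-i})$ be the conditional law of $a_{-i}$. Define $\mu\in\Delta_{K\times\mathcal B_{-i}\times\mathcal B_{-i}\times A_{-i}}$ by
$$
\mu(k,b^-_{-i},b^+_{-i},a_{-i})\coloneqq p_i^-(k,b^-_{-i})\,\lambda_i(b^+_{-i}\mid k,b^-_{-i})\,q^-(a_{-i}\mid k,b^-_{-i}),
$$
that is, conditionally independent gluing of $\lambda_i$ and $q^-$ over the common coordinate $(k,b^-_{-i})$. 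Set $q^+\coloneqq\marg_{K,b^+_{-i},a_{-i}}\mu$.

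It remains to check three things. First, $\marg_{K,\mathcal B_{-i}}q^+=\marg_{k,b^+}\lambda_i=p_i^+$. Second, $\mu$-almost surely $a_{-i}\in b^-_{-i}$, since this holds $q^-$-a.s., and $b^-_j\subseteq b^+_j$ for all $j\neq i$, since this holds $\lambda_i$-a.s. Hence $a_j\in b^+_j$ for every $j$ $\mu$-a.s., so $q^+\in\mathcal Q_i(p_i^+)$. Third, $q^+$ and $q^-$ both have the same marginal on $(k,a_{-i})$ as $\mu$. Therefore $\mathbb E_{q^+}[u_i(a_i',a_{-i},k)]=\mathbb E_{q^-}[u_i(a_i',a_{-i},k)]$ for every $a_i'\in A_i$, so $a_i$ maximizes under $q^+$ and lies in $\BR_i(p_i^+)$.

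The only delicate step is the gluing construction. One must verify that the two almost-sure events combine, which is a union of two null sets, and that the conditional-independence product has the right marginals, which is immediate from finiteness. No topology or measurability issue arises, since everything lives on finite sets.
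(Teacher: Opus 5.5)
Your proposal is correct and matches the paper's proof: glue the nested coupling $\lambda_i$ with the conditional law of $a_{-i}$ under $q^-$ over the common coordinate $(k,b^-_{-i})$. The resulting $q^+$ lies in $\mathcal Q_i(p_i^+)$ and has the same $K\times A_{-i}$ marginal as $q^-$, so every action optimal under $q^-$ remains optimal. Your explicit product formula on finite sets simply makes concrete the paper's step ``draw according to $\lambda_i$, then draw $a_{-i}$ from $q_i^-$.''
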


The proof is in Appendix \ref{Prelim}.

\subsection{Strategic Type Spaces}\label{sec:Axioms}

We first define type space quotients. A type space quotient assigns a reduced type to every belief over the state and opponents' reduced types, while preserving the belief structure of every Harsanyi type space.

\begin{definition}[Type Space Quotient]\label{TS}
A pair $\mathscr{S} = (\mathcal{S}_i,\beta_i)_{i}$ is a \emph{Type Space Quotient} if the maps $\beta_i$ are measurable surjections and,
for every Harsanyi type space $\mathscr{H} = (T_i,\pi_i)_{i}$, there exists a family of measurable maps $(\chi_i)_{i}$ for which the following diagram commutes:
\begin{center}
\begin{tikzcd} 
T_i\arrow{d}{\chi_i} \arrow{r}{\pi_i} & \Delta_{K \times T_{-i}}  \arrow[shift left=1.75ex]{d}{\chi_{-i}}    \arrow[shift right=1.2ex]{d}[swap]{id}     \\
\mathcal{S}_i &  \arrow{l}{\beta_i} \Delta_{K \times \mathcal{S}_{-i}}     
 \end{tikzcd}
\end{center} 
\end{definition}

The diagram says that a type's reduced description depends only on its induced belief over $K\times\mathcal S_{-i}$. Thus two Harsanyi types with the same induced belief receive the same reduced description. The converse need not hold: distinct induced beliefs may receive the same description.

Reduced types therefore partition beliefs.\footnote{\cite{chen2016characterizing} define frames as partitions of belief hierarchies in the universal type space of \cite{mertens1985formulation} that respect the types' belief structure. Frames are special cases of type space quotients.} In a Harsanyi type space, each type has a unique belief. A type space quotient instead allows several beliefs to share the same reduced type and can therefore discard distinctions present in the original type space.

The singleton $\mathcal S_i=\{*\}$ with the constant map $\beta_i\colon\Delta_{K\times\mathcal S_{-i}}\to\{*\}$ is a type space quotient for any game. Thus the quotient property alone may discard strategically relevant information. The Mertens--Zamir construction represents a type by a consistent hierarchy of finite-order beliefs. For a fixed game, we retain at each order only the resulting set of best replies to beliefs over opponents' preceding descriptions. Consistency across orders gives the following tower.

\begin{definition}[Strategic-Description Tower]\label{BR-meas}
Let $(\mathcal{S},\beta)$ be a type space quotient. A \emph{strategic-description tower} is a family of measurable maps
$$
d_i^m:\mathcal{S}_i\longrightarrow\mathcal{B}_i^{m+1},
\qquad i\in I,\quad m\geq 0.
$$
Writing $b_i^m$ for the last coordinate of $d_i^m$, the tower satisfies $d_i^0(s_i)=(A_i)$ and, for every $p_i\in\Delta_{K\times\mathcal{S}_{-i}}$ and every $m\geq 0$,
$$
d_i^{m+1}\bigl(\beta_i(p_i)\bigr)
=
\left(
d_i^m\bigl(\beta_i(p_i)\bigr),
\BR_i\bigl(p_i\circ(\id_K,b_{-i}^m)^{-1}\bigr)
\right).
$$
\end{definition}

The tower preserves every finite round of \emph{synchronized} ICR: all opponents are evaluated at the same round $m$. It does not require a description to determine best replies when different opponents are evaluated at different rounds.

Because $\beta_i$ is onto, the order-$(m+1)$ description contains the order-$m$ description as its prefix. The tower records successive best-reply sets while discarding belief distinctions that never affect them. It parallels the Mertens--Zamir hierarchy: their recursion adds a belief over nature and opponents' preceding belief descriptions, whereas ours adds the resulting best-reply set.

\begin{definition}[Strategic Quotient]
A \emph{Strategic Quotient} is a triple $\mathscr S=(\mathcal S,\beta,d)$, where $(\mathcal S,\beta)$ is a type space quotient and $d$ is a strategic-description tower.
\end{definition}

The universal type space, endowed with its induced finite-order strategic descriptions, is a Strategic Quotient. In a game without uncertainty, the singleton type space is also a Strategic Quotient.

The next definition formalizes when one Strategic Quotient is a reduction of another. Let $\widetilde{\mathscr S}=(\widetilde{\mathcal S},\widetilde\beta,\widetilde d)$ and $\mathscr S=(\mathcal S,\beta,d)$ be Strategic Quotients.

\begin{definition}[Morphism]\label{reduction}
A \emph{morphism}, or reduction map, from $\widetilde{\mathscr S}$ to $\mathscr S$ is a family of measurable surjections $\zeta_i:\widetilde{\mathcal S}_i\to\mathcal S_i$ such that the following diagram commutes:
\begin{center}
\begin{tikzcd} 
\tilde{\mathcal{S}}_i \arrow[twoheadrightarrow]{d}{ \zeta_i} & \arrow{l}[swap]{\tilde{\beta}_i} \Delta_{K \times \tilde{\mathcal{S}}_{-i}}  \arrow[twoheadrightarrow, shift left=1.75ex]{d}{\zeta_{-i}}    \arrow[shift right=1.2ex]{d}[swap]{id}     \\
\mathcal{S}_i &  \arrow{l}{\beta_i} \Delta_{K \times \mathcal{S}_{-i}}     
 \end{tikzcd}
\end{center} 
and
$$
d_i^m\circ\zeta_i=\widetilde d_i^m
\qquad\text{for every }i\in I\text{ and }m\geq 0.
$$
The Strategic Quotient $\mathscr S$ is \emph{smaller than} $\widetilde{\mathscr S}$ if there exists a morphism from $\widetilde{\mathscr S}$ to $\mathscr S$.
\end{definition}

The first condition preserves beliefs over the reduced space, and the second preserves every finite-order strategic description. Thus a smaller quotient removes distinctions without changing either the induced beliefs or the strategic descriptions.

\begin{definition}[STS]\label{Mini}
A \emph{strategic type space} (STS) is a terminal Strategic Quotient: every Strategic Quotient admits a unique morphism to it.
\end{definition}

Equivalently, the STS is the unique smallest Strategic Quotient. It retains one point for each complete strategic-description tower and no redundant copies of the same tower.

\subsection{The Strategic Type Space}\label{sec:Charact}

We now establish existence and uniqueness. First, the tower of any Strategic Quotient recovers the ICR hierarchy of every Harsanyi type (Lemma \ref{StratThm1}). We then construct the set $\mathcal S_i$ of all best-reply hierarchies directly and show that these are exactly the ICR hierarchies generated by types in Harsanyi type spaces (Lemma \ref{thm:ICR_hierarchies}). Finally, we define the belief maps $\beta_i$ and show that the resulting Strategic Quotient is the smallest one (Lemma \ref{RepThm1a} and Theorem \ref{CoroEx}).

Our first result states that every Strategic Quotient allows us to recover the ICR hierarchies from any Harsanyi type.

\begin{lemma}[Strategic Towers Recover ICR]\label{StratThm1}
Let $\mathscr S=(\mathcal S,\beta,d)$ be a Strategic Quotient. For every Harsanyi type space $(T_i,\pi_i)_i$ and associated maps $(\chi_i)_i$ satisfying Definition \ref{TS},
$$
d_i^m\bigl(\chi_i(t_i)\bigr)
=
\bigl(\ICR_i^0(t_i),\ldots,\ICR_i^m(t_i)\bigr)
$$
for every $t_i\in T_i$, $i\in I$, and $m\geq 0$.
\end{lemma}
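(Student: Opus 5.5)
We argue by induction on $m$, simultaneously for all players $i\in I$ and all types $t_i\in T_i$. For $m=0$ there is nothing to prove: by Definition \ref{BR-meas}, $d_i^0(s_i)=(A_i)$ for every $s_i$, and $\ICR_i^0(t_i)=A_i$. Since $d_i^{m}$ has the order-$m$ descriptions as its prefix, it suffices at each step to check the last coordinate. Equivalently, writing $b_i^m$ for the last coordinate of $d_i^m$, we must show $b_i^m(\chi_i(t_i))=\ICR_i^m(t_i)$ for all $i,t_i$, given the full prefix by the induction hypothesis.

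For the inductive step, suppose $b_j^m\circ\chi_j=\ICR_j^m$ on $T_j$ for every $j\in I$. Fix $t_i$ and let $p_i\coloneqq\pi_i(t_i)\circ(\id_K,\chi_{-i})^{-1}\in\Delta_{K\times\mathcal S_{-i}}$. Commutativity of the diagram in Definition \ref{TS} gives $\chi_i(t_i)=\beta_i(p_i)$. The tower identity then yields
$$
d_i^{m+1}\bigl(\chi_i(t_i)\bigr)=\Bigl(d_i^m\bigl(\chi_i(t_i)\bigr),\ \BR_i\bigl(p_i\circ(\id_K,b_{-i}^m)^{-1}\bigr)\Bigr).
$$
Next we compute the pushforward by composition of measurable maps:
$$
p_i\circ(\id_K,b_{-i}^m)^{-1}=\pi_i(t_i)\circ\bigl(\id_K,b_{-i}^m\circ\chi_{-i}\bigr)^{-1}=\pi_i(t_i)\circ(\id_K,\ICR_{-i}^m)^{-1}.
$$
The second equality uses the induction hypothesis coordinatewise, since $b_{-i}^m\circ\chi_{-i}=(b_j^m\circ\chi_j)_{j\neq i}=\ICR_{-i}^m$. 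Applying $\BR_i$ gives exactly $\ICR_i^{m+1}(t_i)$ by the recursive definition of ICR, and the prefix matches by the induction hypothesis. This closes the induction.

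The only step needing care is the measurability bookkeeping. We need $\ICR_j^m$ to be measurable on $T_j$, so that the pushforward is defined. This follows from the same induction: $\ICR_j^m=b_j^m\circ\chi_j$ is a composition of measurable maps, because $\chi_j$ and $d_j^m$ are measurable by assumption. We also need the functoriality of pushforwards, $(\mu\circ f^{-1})\circ g^{-1}=\mu\circ(g\circ f)^{-1}$, applied to $f=(\id_K,\chi_{-i})$ and $g=(\id_K,b^m_{-i})$. This identity is immediate from the definitions.

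Beyond this bookkeeping, no use of Lemma \ref{MonoBR} is needed. The argument relies only on the two commuting relations, namely the quotient diagram and the tower recursion. The point of the recursion is that the tower is defined on the image of $\beta_i$, and every $\chi_i(t_i)$ lies in that image.
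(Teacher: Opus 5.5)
Your proof is correct and follows the paper's argument essentially step for step. The paper also inducts on $m$, sets $p_i=\pi_i(t_i)\circ(\id_K,\chi_{-i})^{-1}$, uses the quotient diagram to get $\beta_i(p_i)=\chi_i(t_i)$, and then combines the tower identity with the induction hypothesis to identify the pushforward with $\pi_i(t_i)\circ(\id_K,\ICR_{-i}^m)^{-1}$; your added measurability bookkeeping is harmless, since the paper separately gets measurability of the $\ICR$ maps from the Borel measurability of $\BR_i$.
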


Proofs not given in the main text are in the appendix.

We now construct the set $\mathcal S$ of all best-reply hierarchies. Level zero is the full action set. At each subsequent level, a player holds a belief over nature and opponents' preceding descriptions and records the resulting best-reply set. A finite hierarchy is admissible when one belief generates all its levels.

We define the sets $\mathcal S_i^m$ inductively. An order-$m$ hierarchy is an $(m+1)$-tuple $s_i^m=(A_i,b_i^1,\ldots,b_i^m)$, whose level-zero coordinate is $A_i$. Set $\mathcal S_i^0=\{A_i\}$. Given $\mathcal S^{m-1}$, let $\mathcal S_i^m$ consist of the tuples in $\mathcal S_i^{m-1}\times\mathcal B_i$ for which there exists $p_i\in\Delta_{K\times\mathcal S_{-i}^{m-1}}$ satisfying

\begin{equation}\label{BRhierarchylevelm}
\BR_i(\marg_{K,\ell}(p_i))=b_i^{\ell+1},
\qquad \ell=0,\ldots,m-1,
\end{equation}
where
$$
\marg_{K,\ell}(p_i)
\coloneqq
p_i\circ(\id_K,\proj_{-i,\ell})^{-1}
$$
is the marginal on the state and opponents' level-$\ell$ action sets. We define the set of player $i$'s best-reply hierarchies as
$$
\mathcal{S}_i
\coloneqq
\{s_i\in\mathcal{B}_i^{\mathbb N}:s_i^m\in\mathcal S_i^m
\text{ for every }m\in\mathbb N\}.
$$
We equip $\mathcal S_i$ with the $\sigma$-algebra generated by its coordinate projections. This is the full power set: by Claim \ref{Monoghier}, every hierarchy is decreasing and hence changes only finitely many times, so $\mathcal S_i$ is countable; the coordinates separate its points.

The next lemma shows that $\mathcal S$ is exactly the set of ICR hierarchies that can arise in Harsanyi type spaces.

\begin{lemma}[Best-Reply Hierarchies are ICR Hierarchies]\label{thm:ICR_hierarchies}
\begin{itemize}
\item[(i)] A tuple $s^m\in\mathcal B^{m+1}$ belongs to $\mathcal S^m$ if and only if there exist a Harsanyi type space $(T,\pi)$ and a type profile $t\in T$ such that $s^m=(\ICR^\ell(t))_{\ell\leq m}$.
\item[(ii)] A sequence $s\in\mathcal B^\naturals$ belongs to $\mathcal S$ if and only if there exist a Harsanyi type space $(T,\pi)$ and a type profile $t\in T$ such that $s=(\ICR^\ell(t))_{\ell\geq0}$.
\end{itemize}
\end{lemma}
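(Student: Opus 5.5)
We prove (i) by induction on $m$, establishing both directions together, and then derive (ii) from (i) by an inverse-limit style construction on a countable type space.

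\emph{Direction ``ICR $\Rightarrow\mathcal S$'' in (i).} Fix $(T,\pi)$ and $t$. We show by induction on $m$ that $(\ICR_i^\ell(t_i))_{\ell\le m}\in\mathcal S_i^m$ for every $t_i$. Take $p_i\coloneqq\pi_i(t_i)\circ(\id_K,h_{-i}^{m-1})^{-1}$, where $h_j^{m-1}(t_j)=(\ICR_j^\ell(t_j))_{\ell\le m-1}$. By the induction hypothesis $h_j^{m-1}$ takes values in $\mathcal S_j^{m-1}$, and since $\mathcal S_j^{m-1}$ is finite the map is measurable, so $p_i\in\Delta_{K\times\mathcal S_{-i}^{m-1}}$. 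By construction $\marg_{K,\ell}(p_i)=\pi_i(t_i)\circ(\id_K,\ICR_{-i}^{\ell})^{-1}$. Hence \eqref{BRhierarchylevelm} is exactly the recursive definition of $\ICR_i^{\ell+1}(t_i)$.

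\emph{Direction ``$\mathcal S\Rightarrow$ ICR'' in (i).} We build one Harsanyi type space that realizes every finite hierarchy at once. Let $T_i\coloneqq\bigsqcup_m\mathcal S_i^m$, which is countable, and define a map from each $s_i^m\in\mathcal S_i^m$ to a belief $\pi_i(s_i^m)\in\Delta_{K\times T_{-i}}$. For $m\ge1$, choose a witness $p_i\in\Delta_{K\times\mathcal S_{-i}^{m-1}}$ and view it as a belief on $K\times T_{-i}$ supported on the copies $\mathcal S_{-i}^{m-1}\subseteq T_{-i}$. For $m=0$, choose an arbitrary belief. We then show by induction on $\ell$ that $\ICR_i^\ell(s_i^m)=b_i^\ell$ for all $\ell\le m$; the claim is vacuous for levels beyond $m$. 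The inductive step is as follows. Opponent types $s_j^{m-1}$ have, by the induction hypothesis, $\ICR_j^{\ell}(s_j^{m-1})=\proj_\ell(s_j^{m-1})$ for $\ell\le m-1$. Therefore the pushforward of $\pi_i(s_i^m)$ through $\ICR_{-i}^\ell$ equals $\marg_{K,\ell}(p_i)$, and \eqref{BRhierarchylevelm} gives $\ICR_i^{\ell+1}=b_i^{\ell+1}$. The induction must be run on $\ell$ uniformly across all $m\ge\ell$ simultaneously, which works because level $\ell$ only depends on opponents' levels $<\ell$.

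\emph{Part (ii).} The ``only if'' direction follows from (i) applied at each $m$. For ``if'', the main obstacle is that a witness $p_i$ for $s_i^m$ need not be compatible with a witness for $s_i^{m+1}$, so we cannot glue finite witnesses into an infinite one. Here we use the countability and eventual constancy of hierarchies (Claim \ref{Monoghier}). Since $\mathcal B_i$ is finite and hierarchies are decreasing, each $s_i\in\mathcal S_i$ is constant from some level $M$ on. Take $T_i\coloneqq\mathcal S_i$, which is countable. For each $s_i$, we want a belief $p_i\in\Delta_{K\times\mathcal S_{-i}}$ with $\BR_i(p_i\circ(\id_K,\proj_{-i,\ell})^{-1})=b_i^{\ell+1}$ for all $\ell$. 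To get it, we first take witnesses $p_i^m$ for $s_i^m$ with $m\ge M$ large. We then extend each order-$(m-1)$ opponent hierarchy in their support to a full hierarchy in $\mathcal S_{-i}$, and pass to a limit in $m$. Along a subsequence, finiteness of the best-reply sets means the conditions at each fixed $\ell$ stabilize. The extension step itself is a König-type compactness argument: the finitely branching tree $(\mathcal S_j^m)_m$ has the property that every node in it extends to the next level, since a witness for $s_j^m$ can be reused for $(s_j^m,\BR_j(\marg_{K,m}))$ after extending the opponents' supports, by induction. Where limits of beliefs might lose mass or change best replies, we use the upper/lower hemicontinuity of $\BR_i$ together with Lemma \ref{MonoBR}. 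Specifically, we choose extensions that are constant after level $m$, so that the marginals at levels $\le m-1$ are unchanged and the later levels are controlled by monotonicity and eventual constancy. Once $\pi_i(s_i)\coloneqq p_i$ is defined, the same induction on $\ell$ as in (i) gives $\ICR_i^\ell(s_i)=\proj_\ell(s_i)$ for all $\ell$. I expect this gluing/extension step to be the delicate part of the proof.
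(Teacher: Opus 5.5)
Your part (i) is correct, and the single countable space $T_i=\bigsqcup_m\mathcal S_i^m$ is a valid alternative to the paper's inductive gluing of finite spaces. One small fix: $h_j^{m-1}$ is measurable because the sets $\{p:\BR_j(p)=b_j\}$ are Borel, not because $\mathcal S_j^{m-1}$ is finite. Your closing step in (ii), the type space $T_i=\mathcal S_i$ with $\pi_i(s_i)\in\beta_i^{-1}(s_i)$, also matches the paper. But it presupposes that for every $s_i\in\mathcal S_i$ there is one belief $p_i\in\Delta_{K\times\mathcal S_{-i}}$ with $\beta_i(p_i)=s_i$. You never produce such a belief, and both mechanisms you sketch fail.

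\emph{Limits of witnesses.} $\BR_i$ has a closed graph but is not lower hemicontinuous. A weak limit $p^*$ of witnesses therefore only satisfies $\BR_i(\marg_{K,\ell}(p^*))\supseteq s_{i,\ell+1}$, and the inclusion can be strict. In the symmetric dominance example, witnesses for $(ab,b,b,\ldots)$ with $p^m(k_B)\downarrow\tfrac12$ converge to a belief with $p^*(k_B)=\tfrac12$, which yields $ab$ at round one. Your remark that ``the conditions at each fixed $\ell$ stabilize'' is vacuous, since every witness of order $m>\ell$ already satisfies them; what fails is the limit.

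\emph{Constant extensions.} You give no argument that opponents' finite hierarchies have extensions in $\mathcal S_{-i}$ that are constant after a prescribed round, and in general they do not. In a complete-information game where some action of $j$ is deleted only at round two, $\mathcal S_j$ is a singleton whose element changes at round two. If you extend arbitrarily instead, opponents' sets can keep shrinking after round $m-1$. By Lemma \ref{MonoBR}, player $i$'s sets can then shrink after round $m$, so the belief generates $s_i^m$ but not the constant tail of $s_i$.

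The missing idea is the paper's control of tails by maximal continuations.
\begin{itemize}
\item[(a)] Order hierarchies by $x_j\preceq_j y_j$ iff $x_{j,n}\subseteq y_{j,n}$ for all $n$. Compactness of $\mathcal S_j$, together with Dickson's lemma on deletion-time vectors, shows that the continuations of any prefix have finitely many $\preceq_j$-maximal elements and that each continuation is dominated by one of them.
\item[(b)] An induction on $m$ using Lemma \ref{MonoBR} gives a uniform $M$ such that every maximal continuation of an order-$m$ prefix is constant from round $m+M$ on.
\item[(c)] If $s_i$ is constant from round $m$, take a witness of $s_i^N$ with $N=m+M$, not of $s_i^m$. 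Extend it to full hierarchies, and replace each opponent hierarchy by a dominating maximal continuation of its order-$(m-1)$ prefix. Call the resulting belief $p_i^+$ and set $r_i=\beta_i(p_i^+)$.
\item[(d)] Monotonicity gives $s_{i,n}\subseteq r_{i,n}$ for $n\le N$, with equality through round $m$. Since $s_i$ is constant after $m$ and $r_i$ is decreasing, equality holds through $N$. The opponents stabilize by round $m-1+M$, so $r_i$ is constant from $N$ on. Hence $r_i=s_i$.
\end{itemize}
Your final sentence gestures at this sandwich argument. Without uniformly stabilizing dominating extensions and a witness of order $m+M$, the ``if'' direction of part (ii) is not proved.
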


The ``if'' direction in both parts follows from the definitions. The converse requires a single belief to generate every level of a complete hierarchy. Let $\beta_i^0$ be the constant map with value $(A_i)$. For every $m\geq1$, define the best-reply map on finite hierarchies $\beta_i^m:\Delta_{K\times\mathcal S_{-i}^{m-1}}\to\mathcal S_i^m$ by
\begin{equation}\label{Eq_BRSTS}
\beta_i^m(p_i) \coloneqq (A_i,\BR_i(\marg_{K,0}(p_i)), \dots, \BR_i(\marg_{K,m-1}(p_i))).
\end{equation}
A belief $p_i$ on $K \times\mathcal S_{-i}$ induces, for every $m\geq1$, a belief $p_i^m$ on $K\times\mathcal S_{-i}^{m-1}$ by projection on the first $m$ coordinates. The finite descriptions $\beta_i^m(p_i^m)$ agree under truncation and therefore define a unique hierarchy, denoted $\beta_i(p_i)$. This gives a map $\beta_i \colon \Delta_{K \times\mathcal{S}_{-i}} \to \mathcal{S}_i$. For $s_i=(s_{i,0},s_{i,1},\ldots)\in\mathcal S_i$, define its order-$m$ strategic description by
$$
d_i^m(s_i)\coloneqq(s_{i,0},\ldots,s_{i,m}),
$$
and let $b_i^m(s_i)\coloneqq s_{i,m}$. The proof shows that $\beta_i$ is onto. The difficulty is to find one belief that generates every round. Starting from a sufficiently long finite hierarchy, the proof replaces opponents' continuations so that they stop changing after a uniformly bounded number of rounds; the resulting belief then generates the complete hierarchy.

The next lemma shows that these hierarchy spaces form a Strategic Quotient.
 
 \begin{lemma}[ICR Hierarchies Form a Strategic Quotient]\label{RepThm1a}
 $(\mathcal S,\beta,d)$ is a Strategic Quotient.
 \end{lemma}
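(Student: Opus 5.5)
We verify the three requirements of a Strategic Quotient in turn: $(\mathcal S,\beta)$ is a type space quotient with $\beta_i$ a measurable surjection, and $d$ is a strategic-description tower.

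\textbf{Well-definedness and measurability of $\beta_i$.} For $p_i\in\Delta_{K\times\mathcal S_{-i}}$ and $m\geq1$, the projection $p_i^m$ of $p_i$ on $K\times\mathcal S_{-i}^{m-1}$ is a legitimate belief in the definition of $\mathcal S_i^m$. Hence $\beta_i^m(p_i^m)\in\mathcal S_i^m$ by \eqref{BRhierarchylevelm}. The truncation of $\beta_i^{m+1}(p_i^{m+1})$ to its first $m+1$ coordinates is $\beta_i^m(p_i^m)$, because $\marg_{K,\ell}$ of $p_i^{m+1}$ and of $p_i^m$ coincide for $\ell\leq m-1$. So the limit sequence $\beta_i(p_i)$ lies in $\mathcal S_i$.

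For measurability, $\mathcal S_i$ carries the coordinate $\sigma$-algebra. It therefore suffices that each coordinate $p_i\mapsto\BR_i(\marg_{K,\ell}(p_i))$ is measurable. The map $p_i\mapsto\marg_{K,\ell}(p_i)$ is measurable for the cylinder $\sigma$-algebras, since $\mathcal S_{-i}$ is countable and $K\times\mathcal B_{-i}$ is finite. The map $\BR_i$ is measurable on the finite-dimensional simplex $\Delta_{K\times\mathcal B_{-i}}$: each set $\{p:a_i\in\BR_i(p)\}$ is a finite union of projections of polyhedral conditions in $(p,q)$, hence Borel.

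\textbf{Surjectivity of $\beta_i$.} This is the main obstacle; it is the statement anticipated in the text after Lemma~\ref{thm:ICR_hierarchies}. I would derive it from that lemma's part (ii) together with Lemma~\ref{StratThm1}-type reasoning, as follows. Take $s_i\in\mathcal S_i$. Lemma~\ref{thm:ICR_hierarchies}(ii) gives a Harsanyi type space $(T,\pi)$ and a type $t_i$ whose ICR hierarchy is $s_i$. By the same lemma, for every $j$ the map $\chi_j:t_j\mapsto(\ICR_j^\ell(t_j))_\ell$ sends $T_j$ into $\mathcal S_j$. It is measurable because each $\ICR_j^\ell$ is measurable, by induction using the measurability of $\BR_j$ established above. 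Set $p_i\coloneqq\pi_i(t_i)\circ(\id_K,\chi_{-i})^{-1}$. Then
\[
\marg_{K,\ell}(p_i)=\pi_i(t_i)\circ(\id_K,\ICR_{-i}^\ell)^{-1},
\]
so $\BR_i(\marg_{K,\ell}(p_i))=\ICR_i^{\ell+1}(t_i)$. Hence $\beta_i(p_i)=s_i$. The genuine difficulty, building one belief that generates the entire infinite hierarchy, is thereby delegated to Lemma~\ref{thm:ICR_hierarchies}(ii). If that lemma's proof were itself to use surjectivity, I would instead reproduce its truncation argument: fix a long finite stage, freeze opponents' hierarchies after a uniform bound, and use monotonicity (Lemma~\ref{MonoBR}) together with the fact that hierarchies are decreasing (Claim~\ref{Monoghier}).

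\textbf{Quotient diagram.} Given any Harsanyi type space, take the maps $\chi_i$ defined above. The computation just made shows that $\beta_i\bigl(\pi_i(t_i)\circ(\id_K,\chi_{-i})^{-1}\bigr)$ is the sequence $(A_i,\ICR_i^1(t_i),\ICR_i^2(t_i),\ldots)=\chi_i(t_i)$. So the diagram of Definition~\ref{TS} commutes.

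\textbf{Tower property.} We have $d_i^0(s_i)=(s_{i,0})=(A_i)$, since every element of $\mathcal S_i$ starts with $A_i$. Moreover, $b_{-i}^m=\proj_{-i,m}$ on $\mathcal S_{-i}$, so $p_i\circ(\id_K,b_{-i}^m)^{-1}=\marg_{K,m}(p_i)$. By \eqref{Eq_BRSTS}, the $(m+1)$-th coordinate of $\beta_i(p_i)$ is exactly $\BR_i(\marg_{K,m}(p_i))$. Hence $d_i^{m+1}(\beta_i(p_i))=\bigl(d_i^m(\beta_i(p_i)),\BR_i(p_i\circ(\id_K,b_{-i}^m)^{-1})\bigr)$. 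Finally, the maps $d_i^m$ are measurable because the $\sigma$-algebra on $\mathcal S_i$ is the full power set.
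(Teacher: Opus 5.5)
Your proof is correct and matches the paper's argument. Both use the same $\chi_i(t_i)=(\ICR_i^m(t_i))_{m\geq0}$ for the quotient diagram, coordinatewise measurability, and the tower identity read off from \eqref{Eq_BRSTS}. The only difference is surjectivity: the paper takes it directly from the ``one belief for every order'' step inside the proof of Lemma~\ref{thm:ICR_hierarchies}, whereas you recover it from the statement of part (ii) by pushing a realizing type's belief forward through $\chi_{-i}$. Your derivation is valid because that lemma is already established, but in the paper's ordering part (ii) is itself proved using surjectivity, so your route round-trips through it, and your fallback remark correctly locates where the real work is done.
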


By Lemma \ref{StratThm1}, the tower of any Strategic Quotient recovers every finite-order ICR hierarchy. Lemma \ref{thm:ICR_hierarchies} identifies $\mathcal S$ with all complete ICR hierarchies, and Lemma \ref{RepThm1a} shows that this set is a Strategic Quotient. It remains to prove that this quotient is terminal.

\begin{theorem}[Existence and Uniqueness of the STS]\label{CoroEx}
The Strategic Quotient $(\mathcal S,\beta,d)$ is terminal. Consequently, the STS exists and is unique up to a unique isomorphism.
\end{theorem}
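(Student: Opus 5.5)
Given a Strategic Quotient $\widetilde{\mathscr S}=(\widetilde{\mathcal S},\widetilde\beta,\widetilde d)$, the plan is to define $\zeta_i\colon\widetilde{\mathcal S}_i\to\mathcal S_i$ by reading off the complete tower:
$$
\zeta_i(\tilde s_i)\coloneqq\bigl(\widetilde b_i^m(\tilde s_i)\bigr)_{m\geq0},
$$
where $\widetilde b_i^m$ is the last coordinate of $\widetilde d_i^m$. Uniqueness is then immediate. Any morphism must satisfy $d_i^m\circ\zeta_i=\widetilde d_i^m$ for all $m$, and $d_i^m$ is the truncation of a sequence in $\mathcal S_i$. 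Hence the whole sequence $\zeta_i(\tilde s_i)$ is pinned down by the $\widetilde d_i^m(\tilde s_i)$.

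Existence requires four checks.

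\textbf{(a) The map lands in $\mathcal S_i$.} Every $\tilde s_i$ has the form $\widetilde\beta_i(\tilde p_i)$ because $\widetilde\beta_i$ is onto. We need, for each $m$, a belief on $K\times\mathcal S^{m-1}_{-i}$ generating the first $m$ levels. Take the image of $\tilde p_i$ under $(\id_K,\widetilde d^{m-1}_{-i})$. The tower identity shows that its level-$\ell$ marginals give $\widetilde b_i^{\ell+1}$. This belief is supported on $\mathcal S^{m-1}_{-i}$ only if the opponents' descriptions themselves lie in $\mathcal S^{m-1}_{-j}$, so the argument is an induction on $m$, run simultaneously for all players.

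A cleaner alternative uses the Harsanyi type space $(\widetilde{\mathcal S}_i,\tilde\pi_i)$ with $\tilde\pi_i$ a measurable selection of a right inverse of $\widetilde\beta_i$. If no measurable selection exists, we fall back on the induction. In that type space, Lemma \ref{StratThm1} and Lemma \ref{thm:ICR_hierarchies}(ii) give membership directly.

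\textbf{(b) Measurability.} Each coordinate of $\zeta_i$ is measurable and $\mathcal S_i$ carries the coordinate $\sigma$-algebra, so $\zeta_i$ is measurable.

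\textbf{(c) The diagram commutes.} We show $\zeta_i(\widetilde\beta_i(\tilde p))=\beta_i(\tilde p\circ(\id_K,\zeta_{-i})^{-1})$ coordinatewise. The $(m+1)$-th coordinate of the left side is $\BR_i(\tilde p\circ(\id_K,\widetilde b^m_{-i})^{-1})$ by the tower identity. On the right, $\marg_{K,m}$ of the pushforward is the same measure, because $\proj_{-i,m}\circ\zeta_{-i}=\widetilde b^m_{-i}$.

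\textbf{(d) Surjectivity.} This is the step I expect to be the main obstacle. Take $s_i\in\mathcal S_i$. By Lemma \ref{RepThm1a} (the surjectivity of $\beta_i$), $s_i=\beta_i(p)$ for some $p\in\Delta_{K\times\mathcal S_{-i}}$, and $p$ is countably supported. It suffices to show that each $\zeta_j$ is onto $\mathcal S_j$. Then we lift $p$ to $\tilde p\in\Delta_{K\times\widetilde{\mathcal S}_{-i}}$ by choosing one preimage for each atom, which is possible because $\mathcal S_{-i}$ is countable. Commutativity (c) then gives $\zeta_i(\widetilde\beta_i(\tilde p))=s_i$.

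The argument is circular as stated, so I would break the circle with Lemma \ref{thm:ICR_hierarchies}(ii). Build a single Harsanyi type space $\mathscr H$ whose type set for player $j$ is $\mathcal S_j$ and whose belief map sends $s_j$ to a chosen preimage under $\beta_j$. By Lemma \ref{RepThm1a} together with the countability of $\mathcal S$, this is a legitimate type space in which $\ICR$-hierarchies coincide with the labels, by induction on $m$ using the definition of $\beta$. Applying Definition \ref{TS} to $\mathscr H$ gives maps $\chi_j\colon\mathcal S_j\to\widetilde{\mathcal S}_j$. By Lemma \ref{StratThm1}, $\widetilde d^m_j(\chi_j(s_j))=(\ICR^\ell_j(s_j))_{\ell\le m}=d^m_j(s_j)$, so $\zeta_j\circ\chi_j=\id$, and $\zeta_j$ is onto.

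\textbf{Uniqueness up to isomorphism.} This follows by the standard terminal-object argument. Composing the morphisms between two terminal quotients gives endomorphisms, and each must equal the identity by uniqueness.
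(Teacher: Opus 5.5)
Your proposal is correct and follows the paper's proof essentially step for step. It uses the same map $\zeta_i(\tilde s_i)=(\widetilde b_i^m(\tilde s_i))_m$, the same induction placing each prefix in $\mathcal S_i^m$, the same coordinatewise check of the diagram, and the same surjectivity argument via Definition~\ref{TS} and Lemma~\ref{StratThm1} applied to a Harsanyi type realizing $s_i$; your choice of the canonical type space on $\mathcal S$ from Lemma~\ref{thm:ICR_hierarchies}(ii) even gives a global section with $\zeta_j\circ\chi_j=\id$. For existence you leave $d_i^m\circ\zeta_i=\widetilde d_i^m$ implicit, but it holds by construction, since the tower identity and surjectivity of $\widetilde\beta_i$ give $\widetilde d_i^m(\tilde s_i)=(\widetilde b_i^0(\tilde s_i),\ldots,\widetilde b_i^m(\tilde s_i))$; the measurable-selection detour in (a) and the abandoned lifting argument in (d) are unnecessary.
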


\section{Finite Representation of the STS}\label{sec:FiniteSTS}

After a finite hierarchy, several infinite tails may remain possible. We use the set of these possible tails as the state of the automaton.

For every $m\in\naturals$, player $i$, and finite hierarchy $s_i^m\in\mathcal S_i^m$, define its continuation set by
\begin{equation}\label{Om}
\mathcal T_i(s_i^m)
\coloneqq
\left\{
(\hat s_{i,m},\hat s_{i,m+1},\ldots):
\hat s_i\in\mathcal S_i
\text{ and }
\hat s_i^m=s_i^m
\right\}.
\end{equation}
Lemma \ref{thm:ICR_hierarchies} implies that every finite hierarchy has a complete extension, so these sets are nonempty. Let
\begin{equation}\label{O}
\Omega_i
\coloneqq
\left\{
\mathcal T_i(s_i^m):
s_i^m\in\mathcal S_i^m,\ m\in\naturals
\right\}.
\end{equation}

\begin{theorem}[Finite continuation state space]\label{FiniteO}
For every player $i$, the set $\Omega_i$ is finite.
\end{theorem}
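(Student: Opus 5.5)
The plan is to find a finite summary of a finite hierarchy $s_i^m$ that determines $\mathcal T_i(s_i^m)$, and then to show that only finitely many such summaries occur. Two facts already in hand drive this. First, by Claim \ref{Monoghier} every hierarchy in $\mathcal S_i$ is decreasing in $\mathcal B_i$, so along any hierarchy the label changes at most $|A_i|-1$ times. Second, by the surjectivity in Lemma \ref{thm:ICR_hierarchies} every element of $\mathcal S_i$ is $\beta_i(p_i)$ for some $p_i\in\Delta_{K\times\mathcal S_{-i}}$. From the second fact I will derive a recursive description of continuation sets. For $s_i^m\in\mathcal S_i^m$, a tail belongs to $\mathcal T_i(s_i^m)$ exactly when it is
$$\bigl(\BR_i(p_i\circ(\id_K,\proj_{-i,\ell})^{-1})\bigr)_{\ell\geq m-1}$$
for some $p_i$ whose first $m$ rounds reproduce $s_i^m$. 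Every such $p_i$ is a mixture over the finite hierarchies $s_{-i}^{m-1}$ of opponents' tails, and those tails range over the sets $\mathcal T_j(s_j^{m-1})$.

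The key step is to reduce the data that matter for the future to finitely many objects. The continuation of $i$ depends on $p_i$ only through two things: the joint law of $(k,b_{-i}^{m-1})$, which fixes the last label, and the conditional laws over opponents' tails, which are restricted to the sets $\mathcal T_{-i}(s_{-i}^{m-1})$. Conversely, any choice of tails inside those sets is admissible, because the constraint defining $\mathcal S_i^m$ involves only the first $m$ coordinates. Hence $\mathcal T_i(s_i^m)$ is a function $\Phi_i$ of two inputs. The first is the set $P$ of admissible beliefs on $K\times\mathcal S_{-i}^{m-1}$ generating $s_i^m$. The second is the profile of opponents' continuation states. I then plan to show, using Lemma \ref{MonoBR}, that $P$ can be replaced by a finite invariant. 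Concretely, for each state $k$ and each profile of opponents' continuation states $\omega_{-i}$, record which pairs $(k,\omega_{-i})$ can carry positive mass; together with the final label $b_i^m$ this should pin down the tail set. The justification is that best replies to any belief over tails are unions of argmaxes over compatible $q$, and on the finite set $\mathcal B_i$ such unions can be realized by finitely many extremal mixtures.

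Given this, the finiteness is a fixed-point argument. Define $\Omega_i^n$ as the set of continuation states of finite hierarchies of length at most $n$. The recursion expresses each element of $\Omega_i^{n+1}$ as $\Phi_i$ applied to a label in $\mathcal B_i$ and a support pattern over $K\times\Omega_{-i}^n$. This gives a monotone operator on families of subsets of a finite universe, provided we know in advance a finite universe in which continuation states live. I would obtain that universe from the decreasing property. Every tail is described by its finitely many change times and labels, and after the last change of all players the tails are constant. Using the ``uniformly bounded stopping'' construction from the proof of Lemma \ref{thm:ICR_hierarchies}, I would show that there is a bound $N$ depending only on the game such that shifting change times beyond $N$ does not alter the set of achievable label patterns. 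In other words, $\mathcal T_i(s_i^m)=\mathcal T_i(s_i^{m'})$ whenever $s_i^{m'}$ is obtained from $s_i^m$ by deleting repeated labels down to the bounded pattern. This reduces each continuation state to one indexed by a finite hierarchy of bounded length, so $\Omega_i$ is contained in the finite set $\Omega_i^N$.

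The main obstacle is this last uniformity step. One must show that long stretches of constant labels in $s_i^m$ do not carry information about the future, because the opponents' beliefs can be ``reset'' after any constant stretch. I would try first to prove a pumping lemma: if some player's finite hierarchy has a constant stretch longer than a bound $N$ (for example a function of $\sum_j 2^{|A_j|}$ and $|K|$), then removing one repetition leaves every player's continuation set unchanged. The proof would go by simultaneous induction over players, transporting the witnessing beliefs via a shift of opponents' hierarchies, with Lemma \ref{MonoBR} controlling the labels during the shift.
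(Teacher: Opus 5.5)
\textbf{There is a genuine gap.} It sits at the step you flag as the main obstacle, and that step is false as stated, not merely unproved.

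You want a pumping lemma: deleting one repetition from a sufficiently long constant stretch leaves every continuation set unchanged, so long stretches ``carry no information about the future.'' The paper's asymmetric dominance-state example refutes this. There player~1 can first reach $b$ only at odd rounds and $a$ only at even rounds, and player~2 the reverse. Let $\sigma^m\coloneqq(ab,\ldots,ab)\in\mathcal S_1^m$. For every even $m$, the tail $(ab,b,b,\ldots)$ lies in $\mathcal T_1(\sigma^m)$; type $t_1=m$ of the paper's construction realizes it. It does not lie in $\mathcal T_1(\sigma^{m+1})$, because that would make player~1 first reach $b$ at the even round $m+2$. So removing a single repetition from an arbitrarily long $ab$-stretch always changes the continuation set, and no reduction by ``deleting repeated labels down to a bounded pattern'' can hold.

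Your transport argument fails for the same reason. Shifting an opponent's hierarchy by one round leaves $\mathcal S_j$: player~2's $(ab,a,a,\ldots)$ becomes $(ab,ab,a,\ldots)$, which would have player~2 first reach $a$ at an even round. Stretch lengths do matter, but only through an eventually periodic equivalence, and your argument has no mechanism that produces the period (here two). The conclusion $\Omega_i=\Omega_i^N$ for some $N$ is true, but only as a consequence of finiteness.

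Two other steps are also unsupported. First, the set $P$ of admissible beliefs is cut out by quantitative threshold inequalities at rounds $0,\ldots,m-1$. Recording only which cells $(k,\omega_{-i})$ can carry positive mass discards the quantitative information that later thresholds test, and the remark about extremal mixtures does not show otherwise. Second, the monotone-operator step, as you note yourself, presupposes the finite universe the pumping lemma was supposed to supply. So nothing in the proposal actually bounds $\Omega_i$.

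\textbf{The missing idea} is a finite-state recursion that makes the dependence on stretch lengths eventually periodic. The paper tracks, for finitely many selected rounds $n_1\le\cdots\le n_q$, the set $R^{q,\infty}(\mathbf n)$ of jointly attainable labels at those rounds together with the limit label.
\begin{itemize}
\item Surjectivity of $\beta_i$ and sections of the coordinate projections let you realize any attainable coordinate pattern for opponents. Lemma~\ref{LimitBR} shows the limit label is the best reply to the limit marginal. Together these give the shift rule: moving every selected round up by one acts on $R^{q,\infty}$ by a fixed self-map $\Phi_{q+1}$ of a finite set. Inserting round $0$ acts by $\iota_{q+1}$.
\item Hence $R^{q,\infty}(\mathbf n)$ depends only on the gaps of $\mathbf n$. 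Each gap enters only through the finitely many distinct powers $\Phi_r^n$, which defines the congruence $\equiv_Q$.
\item A decreasing candidate sequence belongs to $\mathcal S_i$ if and only if its values at block endpoints and in the limit form an attainable tuple. This involves a bounded number of coordinates.
\end{itemize}
So the continuation set of a prefix depends only on its chain of labels and the $\equiv_Q$-classes of its block lengths. If you want to keep your pumping architecture, the correct statement is that deleting a multiple of a period, beyond a preperiod, preserves continuation sets. It is exactly this finite-state argument that proves it.
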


The proof is in Appendix \ref{FiniteSTSProof}. Every ICR hierarchy is decreasing and therefore has at most $|A_i|-1$ strict changes. Only the lengths of the intervening constant periods can be unbounded. The finite best-reply update distinguishes only finitely many effects of these lengths. A finite summary of the successive action sets and the lengths of their constant periods therefore determines all possible continuations. Since there are finitely many such summaries, $\Omega_i$ is finite.

\subsection{Automaton Representation}

An automaton for player $i$ is a tuple
$\hat{\mathscr A}_i
=(\hat\Omega_i,\hat\ell_i,\widehat{\preceq}_i,\hat\omega_i^0)$,
where $\hat\Omega_i$ is a finite set of states,
$\hat\ell_i:\hat\Omega_i\to\mathcal B_i$ assigns a label to every state,
$\widehat{\preceq}_i$ is a successor relation, and
$\hat\omega_i^0\in\hat\Omega_i$ is the initial state. A path is a sequence
$(\omega_i^0,\omega_i^1,\ldots)$ such that
$\omega_i^0=\hat\omega_i^0$ and
$\omega_i^m\widehat{\preceq}_i\omega_i^{m+1}$ for every $m$.
Let $P_{\hat{\mathscr A}_i}$ denote the set of paths.

Every $\omega_i\in\Omega_i$ is nonempty and all its elements have the same first coordinate. Let $\ell_i(\omega_i)$ denote this coordinate. Define
$$
\gamma_i(b_i^0,b_i^1,b_i^2,\ldots)
\coloneqq
(b_i^1,b_i^2,\ldots)
$$
and, for $\omega_i\subseteq\mathcal B_i^\naturals$, let
$\gamma_i(\omega_i)\coloneqq\{\gamma_i(x_i):x_i\in\omega_i\}$.
For $\omega_i,\hat\omega_i\in\Omega_i$, set
\begin{equation}\label{AutomatonSuccessor}
\omega_i\preceq_i\hat\omega_i
\quad\Longleftrightarrow\quad
\hat\omega_i\subseteq\gamma_i(\omega_i).
\end{equation}
Finally, let
$$
\omega_i^0\coloneqq\mathcal T_i((A_i)).
$$
Since every hierarchy begins with $A_i$, $\omega_i^0=\mathcal S_i$.

\begin{theorem}[Finite automaton representation]\label{FiniteAutomaton}
For every player $i$,
$\mathscr A_i=(\Omega_i,\ell_i,\preceq_i,\omega_i^0)$ is a finite automaton satisfying
$$
\left\{
\bigl(\ell_i(\omega_i^0),\ell_i(\omega_i^1),\ldots\bigr):
(\omega_i^0,\omega_i^1,\ldots)\in P_{\mathscr A_i}
\right\}
=
\mathcal S_i.
$$
\end{theorem}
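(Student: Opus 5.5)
Finiteness of $\Omega_i$ is Theorem \ref{FiniteO}, and $\ell_i$ is well defined because every $\omega_i\in\Omega_i$ is nonempty with a common first coordinate. What remains is the equality of label sequences with $\mathcal S_i$, which I will prove by two inclusions. Both rest on one identity about continuation sets. For $s_i\in\mathcal S_i$ and every $m$,
$$
\gamma_i\bigl(\mathcal T_i(s_i^m)\bigr)=\bigcup_{s_i^{m+1}\text{ extending }s_i^m}\mathcal T_i(s_i^{m+1}),
$$
where the union is over $s_i^{m+1}\in\mathcal S_i^{m+1}$ with prefix $s_i^m$. This holds because a tail $(\hat s_{i,m},\hat s_{i,m+1},\ldots)$ of a hierarchy extending $s_i^m$, once its first entry is dropped, is exactly a tail of the hierarchy extending $\hat s_i^{m+1}$. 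In particular $\mathcal T_i(s_i^{m+1})\subseteq\gamma_i(\mathcal T_i(s_i^m))$, that is, $\mathcal T_i(s_i^m)\preceq_i\mathcal T_i(s_i^{m+1})$.

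For the inclusion $\mathcal S_i\subseteq$ label sequences, take $s_i\in\mathcal S_i$ and set $\omega_i^m\coloneqq\mathcal T_i(s_i^m)$. Then $\omega_i^0=\mathcal T_i((A_i))$, consecutive states are successors by the identity above, and $\ell_i(\omega_i^m)=s_{i,m}$ because every element of $\mathcal T_i(s_i^m)$ starts with $s_{i,m}$. So $s_i$ is the label sequence of a path.

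For the reverse inclusion, take an arbitrary path $(\omega_i^m)_m$ with labels $x_m\coloneqq\ell_i(\omega_i^m)$. I show $x\in\mathcal S_i$ by checking $(x_0,\ldots,x_m)\in\mathcal S_i^m$ for every $m$, since $\mathcal S_i$ is defined by its finite truncations. First I prove by induction on $m$ that
$$
\omega_i^m\subseteq\{(\hat s_{i,m},\hat s_{i,m+1},\ldots):\hat s_i\in\mathcal S_i,\ \hat s_i^m=(x_0,\ldots,x_m)\}.
$$
The base case is $\omega_i^0=\mathcal S_i$, whose elements all have first coordinate $x_0=A_i$. For the inductive step, the successor relation gives $\omega_i^{m+1}\subseteq\gamma_i(\omega_i^m)$. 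Any $y\in\omega_i^{m+1}$ is therefore $\gamma_i(z)$ for some $z\in\omega_i^m$, and by hypothesis $z$ is a tail of some $\hat s_i\in\mathcal S_i$ with prefix $(x_0,\ldots,x_m)$. The first coordinate of $y$ is $\hat s_{i,m+1}$, and it equals $x_{m+1}$ because all elements of $\omega_i^{m+1}$ share the label $x_{m+1}$. Hence $y$ is a tail of $\hat s_i$ with prefix $(x_0,\ldots,x_{m+1})$. Since $\omega_i^m$ is nonempty, some $\hat s_i\in\mathcal S_i$ has prefix $(x_0,\ldots,x_m)$, so this prefix lies in $\mathcal S_i^m$.

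The step needing the most care is this inductive bookkeeping. Each state of the path is only known to be a subset of the $\gamma_i$-image of its predecessor, not equal to a continuation set of the specific prefix. The argument works because it propagates only the weaker property that every element of $\omega_i^m$ is a tail of a genuine hierarchy with the right prefix; it never tries to identify $\omega_i^m$ with $\mathcal T_i(x_0,\ldots,x_m)$. Nonemptiness of each state, which the paper derives from Lemma \ref{thm:ICR_hierarchies}, is what turns this property into membership of the prefix in $\mathcal S_i^m$.
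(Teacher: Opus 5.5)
Your proof is correct and follows the paper's argument. The forward inclusion uses $\omega_i^m=\mathcal T_i(s_i^m)$ together with $\mathcal T_i(s_i^{m+1})\subseteq\gamma_i(\mathcal T_i(s_i^m))$, which is the only part of your identity that is needed. The reverse inclusion pulls elements back through $\gamma_i$ to a genuine hierarchy in $\omega_i^0=\mathcal S_i$: your forward-induction invariant is the paper's backward chain construction from a chosen $x_i^M\in\omega_i^M$, organized as an induction.
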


\begin{proof}
If $s_i\in\mathcal S_i$, then
$\omega_i^m\coloneqq\mathcal T_i(s_i^m)$ defines a path and
$\ell_i(\omega_i^m)=s_{i,m}$. Indeed, every complete extension of
$s_i^{m+1}$ is also an extension of $s_i^m$, so
$$
\mathcal T_i(s_i^{m+1})
\subseteq
\gamma_i\bigl(\mathcal T_i(s_i^m)\bigr).
$$

Conversely, consider a path $(\omega_i^m)_m$. The order-zero prefix is automatic. Fix $M\geq1$ and choose $x_i^M\in\omega_i^M$. Since
$\omega_i^{m+1}\subseteq\gamma_i(\omega_i^m)$, backward induction gives
$x_i^m\in\omega_i^m$ for $m=M-1,\ldots,0$ so that
$\gamma_i(x_i^m)=x_i^{m+1}$. This gives
$x_i^0\in\omega_i^0=\mathcal S_i$ whose first $M+1$ coordinates are
$\ell_i(\omega_i^0),\ldots,\ell_i(\omega_i^M)$.
Thus the first $M+1$ labels belong to $\mathcal S_i^M$ for every $M$. By the definition of $\mathcal S_i$, the complete label sequence belongs to $\mathcal S_i$.
\end{proof}

\subsection{Examples}\label{sec:GlobalGames}

We present two examples with two players, states $k_G,k_B$, and action sets $\{a,b\}$. When action sets appear as hierarchy labels, we write $a$, $b$, and $ab$ for $\{a\}$, $\{b\}$, and $\{a,b\}$. In state $k_G$, payoffs are given by the coordination game
\begin{center}\hfill
\begin{game}{2}{2}[$k_G$]
 &  $a$ & $b$ \\
$a$ & $ 1,1$ & $0,0$ \\
$b$ & $0,0$ & $1,1$
\end{game}\hfill~
\end{center}
In state $k_B$, each player has a dominant action. The players have the same dominant action in the first example and different dominant actions in the second. Dominance states initiate the propagation of strategic reasoning in information design \citep{morris22implementation} and global games \citep{carlsson1993global}. The examples show how the payoffs determine this propagation and hence the resulting STS.

\paragraph{Symmetric Dominance State}
In the first variant, action $b$ is strictly dominant for both players in state $k_B$:
\begin{center}\hfill
\begin{game}{2}{2}[$k_B$]
 &  $a$ & $b$ \\
$a$ & -1,-1 & $0,0$ \\
$b$ & $0,0$ & $1,1$
\end{game}\hfill~%
\end{center}%

To begin the hierarchy, consider beliefs $p_i$ assigning probability one to the opponent's full action set $\{a,b\}$. A compatible joint distribution preserves $p_i$'s marginal on the state and selects an opponent action in each state. Holding the state marginal fixed gives the rectangles shown in the right panel of Figure \ref{ICR1}. The left panel divides the simplex $\Delta_{\{k_G,k_B\}\times\{a,b\}}$ according to the resulting best replies. If $p_i(k_G)<1/2$, the entire rectangle lies in the region where $b$ is the unique best reply. If $p_i(k_G)>1/2$, the rectangle reaches regions where $a$ alone, $b$ alone, or both are best replies. At $p_i(k_G)=1/2$, action $a$ can tie with $b$ but cannot be the unique best reply.

\definecolor{colour0}{rgb}{0.8,0.8,0.8}
\definecolor{colour1}{rgb}{0.6,0.6,0.6}

\begin{figure}[H]
\begin{center}
\begin{tikzpicture}[xscale=0.55,yscale=0.5,every node/.style={inner sep=1pt}]
\draw[black,line width=0.02cm] (1.16,-1.815) -- (3.5364528,2.285) -- (6.6,-0.415) -- (5.027019,-2.495) -- cycle;
\draw[black,line width=0.02cm] (1.16,-1.795) -- (6.62,-0.415);
\draw[black,line width=0.02cm,dash pattern=on 0.17638889cm off 0.10583334cm,fill=colour0] (2.9,-2.075) -- (5.86,-1.375) -- (4.16,0.225) -- cycle;
\draw[black,line width=0.02cm] (3.52,2.265) -- (4.98,-2.435);
\node[anchor=south west] at (2.62,-2.595) {$1/2$};
\node[anchor=south west] at (3.56,0.065) {$1/2$};
\node[anchor=south west] at (5.94,-1.715) {$1/2$};
\node[anchor=south west] at (6.74,-0.535) {$(b,k_G)$};
\node[anchor=south west] at (3.04,2.345) {$(b,k_B)$};
\node[anchor=south west] at (4.68,-2.915) {$(a,k_G)$};
\node[anchor=south west] at (0.0,-2.3) {$(a,k_B)$};
\draw[black,line width=0.02cm] (8.94,-1.595) -- (11.316453,2.505) -- (14.38,-0.195) -- (12.807019,-2.275) -- cycle;
\draw[black,line width=0.02cm] (8.94,-1.575) -- (14.4,-0.195);
\draw[black,line width=0.02cm,dash pattern=on 0.17638889cm off 0.10583334cm,fill=colour0] (10.68,-1.855) -- (13.64,-1.155) -- (11.94,0.445) -- cycle;
\node[anchor=south west] at (10.4,-2.375) {$1/2$};
\node[anchor=south west] at (11.46,0.245) {$1/2$};
\node[anchor=south west] at (13.72,-1.495) {$1/2$};
\node[anchor=south west] at (14.52,-0.315) {$(b,k_G)$};
\node[anchor=south west] at (10.82,2.565) {$(b,k_B)$};
\node[anchor=south west] at (12.46,-2.7) {$(a,k_G)$};
\node[anchor=south west] at (7.78,-2.1) {$(a,k_B)$};
\draw[black,line width=0.02cm,fill=colour1] (9.76,-1.675) -- (10.3,-1.235) -- (12.0,1.785) -- (11.6,1.545) -- cycle;
\draw[black,line width=0.02cm,fill=colour1] (11.98,-2.055) -- (12.68,-1.355) -- (12.92,-0.575) -- (12.4,-1.115) -- cycle;
\draw[black,line width=0.02cm] (11.3,2.485) -- (12.76,-2.215);
\draw[black,line width=0.02cm,fill=colour1] (12.98,-0.495) -- (13.72,0.285) -- (13.54,-0.375) -- (13.2,-0.775) -- cycle;
\node[anchor=south west] at (14.02,0.365) {$p_1>1/2$};
\node[anchor=south west] at (12.34,1.685) {$p_1<1/2$};
\end{tikzpicture}
\end{center}
\caption{\small In the region between the shaded triangle and the triangle with vertices $(a,k_B)$, $(b,k_B)$, and $(b,k_G)$, $b$ is the unique best reply. In the region between the shaded triangle and $(a,k_G)$, $a$ is the unique best reply. On the shaded triangle, both actions are best replies. The boundary of the shaded triangle is excluded from the two outer regions.}\label{ICR1}
\end{figure}

For any player $i$ and any belief $p_i$ assigning probability one to the opponent's full action set,
$$
\BR_i(p_i)=b
\quad\Longleftrightarrow\quad
p_i(k_B)>p_i(k_G).
$$
Thus the first-round best-reply set is
\begin{itemize}
    \item[(1)] $b$ if $p_i(k_B)>p_i(k_G)$;
    \item[(2)] $ab$ if $p_i(k_B)\leq p_i(k_G)$.
\end{itemize}
In particular, no first-round belief yields $a$. At later rounds, beliefs are supported on $K\times\{ab,b\}$, and
\begin{itemize}
    \item[($i$.1)] $b$ if $p_i(k_G,b)+p_i(k_B)>p_i(k_G,ab)$;
    \item[($i$.2)] $ab$ otherwise.
\end{itemize}

Hence $ab$ may be followed by $ab$ or $b$, while $b$ is absorbing. Every path is realized by the standard email-chain construction with $T_i=\mathbb N\cup\{\infty\}$: type $0$ is certain of $(k_B,0)$, type $t>0$ is certain of $(k_G,t-1)$, and type $\infty$ is certain of $(k_G,\infty)$. The latter remains at $ab$ forever. Figure~\ref{Automa} therefore represents the STS exactly.

\begin{figure}[H]
\centering
\begin{tikzpicture}[
    ->,
    >=stealth,
    node distance=2.5cm,
    every state/.style={thick, fill=gray!10, minimum size=1cm},
    initial text=start,
    thick,
    shorten >=2pt,
    shorten <=2pt
]
    \node[blue,state,initial left,initial text={Start for Player $i$}] (q1) {$\nodeab$};

    \node[blue,state,accepting,above of=q1] (q3) {$\nodeb$};

    \draw
        (q1) edge[loop right,thick] node {\footnotesize (2), (1.2) or (2.2)} (q1)
(q1) edge[thick] node[right, xshift=2pt] {\footnotesize (1), (1.1) or (2.1)} (q3);
\end{tikzpicture}
\caption{STS automaton for player $i$ in the symmetric dominance-state example. The bottom node is initial, and the double-circled node is absorbing.}
\label{Automa}
\end{figure}

\paragraph{Asymmetric Dominance State}

In the second variant, $b$ is strictly dominant for player~1 and $a$ is strictly dominant for player~2 in state $k_B$:

\begin{center}\hfill
\begin{game}{2}{2}[$k_B$]
 &  $a$ & $b$ \\
$a$ & $0,1$ & $0,0$ \\
$b$ & $1,1$ & $1,0$
\end{game}\hfill~%
\end{center}%
For a first-round belief $p_i$, the best-reply sets are
\begin{itemize}
    \item[(1.1)] $b$ if $i=1$ and $p_i(k_B)>p_i(k_G)$;
    \item[(2.1)] $a$ if $i=2$ and $p_i(k_B)>p_i(k_G)$;
    \item[(3.1)] $ab$ if $p_i(k_B)\leq p_i(k_G)$.
\end{itemize}
For (1.1), the compatible selection least favorable to $b$ chooses $a$ at $(k_G,ab)$; for (2.1), the selection least favorable to $a$ chooses $b$ there. At later rounds, player~1's best-reply set is
\begin{itemize}
    \item[(1.2)] $a$ if $p_1(k_G,a)>p_1(k_G,b)+p_1(k_G,ab)+p_1(k_B)$;
    \item[(1.3)] $b$ if $p_1(k_G,a)+p_1(k_G,ab)<p_1(k_G,b)+p_1(k_B)$;
    \item[(1.4)] $ab$ otherwise.
\end{itemize}
For (1.2), the least favorable selection maps $(k_G,ab)$ to $b$ and $(k_B,ab)$ to $a$; for (1.3), it maps $(k_G,ab)$ to $a$ and $(k_B,ab)$ to $b$.

For player~2, the corresponding conditions are
\begin{itemize}
    \item[(2.2)] $a$ if $p_2(k_G,a)+p_2(k_B)>p_2(k_G,b)+p_2(k_G,ab)$;
    \item[(2.3)] $b$ if $p_2(k_G,a)+p_2(k_G,ab)+p_2(k_B)<p_2(k_G,b)$;
    \item[(2.4)] $ab$ otherwise.
\end{itemize}
The least favorable selections are the same as for player~1.

These inequalities first imply a restriction at round two. If player~1 retains $ab$ in round one, then $p_1(k_B)\leq p_1(k_G)$. Player~2's first-round set is never $b$, so the induced round-two belief satisfies $p_1(k,b)=0$ for every $k$. Condition (1.3) would require $p_1(k_G)<p_1(k_B)$, a contradiction. Thus player~1 cannot move from $ab$ to $b$ \emph{at round two}. Symmetrically, player~2 cannot move from $ab$ to $a$ at round two.

To establish the restrictions at every round, fix an arbitrary measurable Harsanyi type space. Write $p_i^B(t_i)=\pi_i(t_i)(\{k_B\}\times T_{-i})$ and, for $x\in\{a,b\}$, define
\[
g_{i,x}^r(t_i)=\pi_i(t_i)\{(k_G,t_{-i}):\ICR_{-i}^r(t_{-i})=\{x\}\}.
\]
The best-reply conditions above become, with type arguments suppressed,
\[
\begin{aligned}
\ICR_1^{r+1}=\{a\}&\quad\Longleftrightarrow\quad g_{1,a}^r>1/2,\\
\ICR_1^{r+1}=\{b\}&\quad\Longleftrightarrow\quad p_1^B+g_{1,b}^r>1/2,\\
\ICR_2^{r+1}=\{a\}&\quad\Longleftrightarrow\quad p_2^B+g_{2,a}^r>1/2,\\
\ICR_2^{r+1}=\{b\}&\quad\Longleftrightarrow\quad g_{2,b}^r>1/2.
\end{aligned}
\]
The remaining outcome is $ab$; equality at a threshold retains both actions. At round one, only $b$ can first become a singleton for player~1, and only $a$ for player~2. Since ICR hierarchies decrease, each singleton event grows with the round. For $r\geq1$, a threshold can newly be crossed at round $r+1$ only if the corresponding opponent singleton event has gained types at round $r$: otherwise its probability, and hence the comparison, is unchanged. Induction therefore implies that player~1 can first reach $b$ only at odd rounds and $a$ only at even rounds, with the opposite pattern for player~2. This proves necessity of the transitions in Figure~\ref{fig:automaton1ptech}; the type construction below proves sufficiency.

For each player, the designated $ab$ node is the round-zero state. A path alternates between the two $ab$ nodes until it reaches an absorbing singleton, if ever.

\begin{figure}[H]
\centering
\begin{tikzpicture}[
    ->,
    >=stealth,
    node distance=3.5cm,
    every state/.style={thick, fill=gray!10, minimum size=1cm},
    initial text=start,
    thick,
    shorten >=2pt,
    shorten <=2pt
]
    \node[blue,state,initial left,initial text=start for P1] (q1) {$\nodeab$};
    \node[blue,state,right of=q1,initial right,initial text=start for P2] (q2) {$\nodeab$};
    \node[blue,state,accepting,above of=q1] (q3) {$\nodeb$};
    \node[blue,state,accepting,above of=q2] (q4) {$\nodea$};

    \draw
        (q1) edge[bend right,thick]
            node[below, yshift=-2pt] {\footnotesize (3.1),(1.4) or (2.4)}
            (q2)

        (q2) edge[bend right,thick]
            node[above, yshift=2pt] {\footnotesize (3.1),(1.4) or (2.4)}
            (q1)

        (q1) edge[thick]
            node[left, xshift=-2pt] {\footnotesize (1.1), (1.3) or (2.3)}
            (q3)

        (q2) edge[thick]
            node[right, xshift=2pt] {\footnotesize (2.1), (1.2) or (2.2)}
            (q4);
\end{tikzpicture}
\caption{STS automata in the asymmetric dominance-state example. The left $ab$ node is initial for player~1, the right $ab$ node is initial for player~2, and the double-circled nodes are absorbing.}
\label{fig:automaton1ptech}
\end{figure}

It remains to verify that every path is feasible. An information structure similar to that of \cite{rubinstein1989electronic} does so. Let each player's type set be
$T_i=\mathbb N\cup\{\infty\}$. For finite types, define
$$
\pi_i(k,t_{-i}\mid t_i)
\coloneqq
\begin{cases}
\frac13,
&t_i>0\text{ and }(k,t_{-i})=(k_G,t_i+1),\\
\frac23,
&t_i>0\text{ and }(k,t_{-i})=(k_G,t_i-1),\\
1,
&t_i=0\text{ and }(k,t_{-i})=(k_B,0),\\
0,
&\text{otherwise},
\end{cases}
$$
and let type $\infty$ assign probability one to $(k_G,\infty)$. Induction using the preceding best-reply conditions gives, for every $m\in\mathbb N$,
$$
\ICR_1^m(t_1)
=
\begin{cases}
b,
&t_1\in2\mathbb N\text{ and }m>t_1,\\
a,
&t_1\in2\mathbb N+1\text{ and }m>t_1,\\
ab,
&\text{otherwise},
\end{cases}
$$
and
$$
\ICR_2^m(t_2)
=
\begin{cases}
a,
&t_2\in2\mathbb N\text{ and }m>t_2,\\
b,
&t_2\in2\mathbb N+1\text{ and }m>t_2,\\
ab,
&\text{otherwise}.
\end{cases}
$$
A finite type $t_i$ reaches an absorbing singleton at round $t_i+1$, with the singleton determined by parity. Type $\infty$ remains at $ab$ forever and therefore generates the path that cycles forever between the two nonabsorbing nodes. Hence Figure~\ref{fig:automaton1ptech} represents the STS exactly.

The proof of Theorem~\ref{FiniteO} tracks this propagation through a finite update rule. Here, propagation alternates across players and produces the period two visible in Figure~\ref{fig:automaton1ptech}. In a general finite game, the initial phase and the eventual cycle may be longer.

\section{Discussion}\label{sec:Dis}

\paragraph{Solution concepts and common priors}
We study information through its implications for ICR. This solution concept is defined directly on Harsanyi type spaces, can be computed separately for each type, imposes no common-prior restriction, and admits the recursion used throughout the paper.

We now briefly discuss alternative approaches, which rely on different solution concepts. 

Related concepts include interim independent rationalizability (IIR) \citep{ely2004hierarchies} and interim partially correlated rationalizability (IPCR) \citep{tang2015interim}. For these concepts, action-set descriptions may not record enough information about how a player correlates opponents' actions. An analogous construction would therefore have to enrich each level with finite information about the allowed correlations. Whether finitely many such descriptions suffice for a fixed game remains open. For $\Delta$-rationalizability \citep{battigalli2003rationalization}, the tower would also depend on the imposed belief restriction $\Delta$.

In a companion paper \citep{GossnerVeiel24}, we represent ICR hierarchies in common-prior models as Markov chains on finite state spaces and use this representation to characterize rationalizable outcomes. Under a common prior, one may instead study belief-invariant Bayes correlated equilibrium (BIBCE) as in \cite{liu2015correlation}.

Bayes--Nash equilibrium, belief-invariant Bayesian solutions, and agent-normal-form correlated equilibrium \citep{forges1993five} raise a related problem: their outcomes may depend on correlation possibilities not recorded by action-set hierarchies. Fixing the game may again permit a coarser representation, but these concepts lack the iterative procedure used in our construction.

\paragraph{Universal Type Spaces as Strategic Quotients}

The universal spaces discussed in Section \ref{Litt} can also be viewed as Strategic Quotients. In the Mertens--Zamir space, a universal type and its belief over nature and opponents' universal types determine one another. The inverse belief map gives the quotient structure, and the measurability result of \cite{dekel2007interim} supplies the ICR tower for every finite game.

\cite{ely2004hierarchies} instead retain exactly the information needed for interim independent rationalizability in every finite game. Their construction records hierarchies of beliefs conditional on $K$ and relies on topology. It carries the corresponding strategic descriptions for IIR, but it does not satisfy our quotient requirement on all measurable Harsanyi type spaces.

The purely measurable universal space of \cite{HeifetzSamet1998} applies to the same class of Harsanyi type spaces as our construction. Its belief map is onto and has a measurable inverse, so this space is a type space quotient; \cite{dekel2007interim} again supplies its ICR tower. After fixing the game, hierarchies of best-reply sets replace full hierarchies of beliefs. This gives a countable strategic representation with finitely many continuation states. No topology is imposed on the input type spaces; the proof does use compactness on an auxiliary finite-alphabet sequence space.

\paragraph{Further Extensions} The STS may extend to infinite games through the version of ICR in \cite{weinstein2011interim}. The automaton representation relies on finite state and action sets and would not extend directly. For continuous actions, endpoint descriptions need not give a finite state space.

\bibliographystyle{ecta}
\bibliography{MyBibFile}

\appendix

\section{Proofs}

\subsection{Preliminaries}\label{Prelim}

We first show $\BR_i$ is measurable. For each $a_i\in A_i$, let
\[
C_{i,a_i}=\{p\in\Delta_{K\times\mathcal B_{-i}}:a_i\in\BR_i(p)\}.
\]
This set is the projection on $p$ of the compact set of pairs $(p,q)$ satisfying $q\in\mathcal Q_i(p)$ and the finitely many payoff inequalities making $a_i$ optimal. Thus $C_{i,a_i}$ is closed. For $b_i\in\mathcal B_i$,
\[
\{p:\BR_i(p)=b_i\}
=\bigcap_{a_i\in b_i}C_{i,a_i}\;\cap\!
\bigcap_{a_i\notin b_i}C_{i,a_i}^{\,c}
\]
is Borel, although it need not be closed. Consequently, the following recursion defines measurable action-set maps on every measurable Harsanyi type space.

We begin with the monotonicity lemma used throughout the paper.

\paragraph{Lemma \ref{MonoBR}} \emph{If $p_i^-$ and $p_i^+$ admit a nested coupling, then $\BR_i(p_i^-)\subseteq\BR_i(p_i^+)$.}

\begin{proof}
Let $\lambda_i$ be a nested coupling of $p_i^-$ and $p_i^+$, and fix $q_i^-\in\mathcal Q_i(p_i^-)$. Draw $(k,b_{-i}^-,b_{-i}^+)$ according to $\lambda_i$ and then draw $a_{-i}$ using the conditional probabilities in $q_i^-$ given $(k,b_{-i}^-)$. This is possible because $\lambda_i$ and $q_i^-$ have the same marginal $p_i^-$ on $(k,b_{-i}^-)$. Let $q_i^+$ be the resulting marginal on $(k,b_{-i}^+,a_{-i})$.

The marginal of $q_i^+$ on $(k,b_{-i}^+)$ is $p_i^+$. Moreover,
$a_{-i}\in b_{-i}^-\subseteq b_{-i}^+$ almost surely, so $q_i^+\in\mathcal Q_i(p_i^+)$. The distributions $q_i^-$ and $q_i^+$ have the same marginal on $K\times A_{-i}$ and therefore give every action of player $i$ the same expected payoff. Hence every action that is optimal under $q_i^-$ is also optimal under $q_i^+$.
\end{proof}

\subsection{Strategic Type Spaces}\label{STS}

\subsubsection{Characterization}
\paragraph{Lemma \ref{StratThm1}} \emph{Let $\mathscr S=(\mathcal S,\beta,d)$ be a Strategic Quotient. For every Harsanyi type space $(T_i,\pi_i)_i$ and associated maps $(\chi_i)_i$ satisfying Definition \ref{TS},
$$
d_i^m\bigl(\chi_i(t_i)\bigr)
=
\bigl(\ICR_i^0(t_i),\ldots,\ICR_i^m(t_i)\bigr)
$$
for every $t_i\in T_i$, $i\in I$, and $m\geq0$.}

\begin{proof}
We proceed by induction on $m$. At $m=0$, both sides equal $(A_i)$. Suppose the result holds at $m$. For $t_i\in T_i$, let
$$
p_i\coloneqq\pi_i(t_i)\circ(\id_K,\chi_{-i})^{-1}.
$$
The quotient property gives $\beta_i(p_i)=\chi_i(t_i)$. By the tower property and the induction hypothesis,
\begin{align*}
d_i^{m+1}\bigl(\chi_i(t_i)\bigr)
&=
\left(
d_i^m\bigl(\chi_i(t_i)\bigr),
\BR_i\bigl(p_i\circ(\id_K,b_{-i}^m)^{-1}\bigr)
\right)\\
&=
\left(
\ICR_i^0(t_i),\ldots,\ICR_i^m(t_i),
\BR_i\bigl(\pi_i(t_i)\circ(\id_K,\ICR_{-i}^m)^{-1}\bigr)
\right)\\
&=
\bigl(\ICR_i^0(t_i),\ldots,\ICR_i^{m+1}(t_i)\bigr).
\end{align*}
\end{proof}

\begin{claim}[Monotonicity of Hierarchies]\label{Monoghier}
For every player $i$, every $m\geq1$, every $s_i^m\in\mathcal S_i^m$, and every $0\leq l<m$,
$$
s_{i,l+1}\subseteq s_{i,l}.
$$
Consequently, every hierarchy in $\mathcal S_i$ is decreasing.
\end{claim}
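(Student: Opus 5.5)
The plan is induction on $m$, proving simultaneously for all players the statement: for every $s_i^m\in\mathcal S_i^m$ and every $0\le l<m$, $s_{i,l+1}\subseteq s_{i,l}$. The tool is Lemma~\ref{MonoBR}. Since the level-zero coordinate is $A_i$, the case $l=0$ is immediate: $s_{i,1}\subseteq A_i=s_{i,0}$. This already settles $m=1$.

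For the inductive step, fix $s_i^m\in\mathcal S_i^m$ with $m\ge2$ and $1\le l<m$. Take a belief $p_i\in\Delta_{K\times\mathcal S_{-i}^{m-1}}$ witnessing \eqref{BRhierarchylevelm}, so that
\[
s_{i,l+1}=\BR_i(\marg_{K,l}(p_i)),\qquad s_{i,l}=\BR_i(\marg_{K,l-1}(p_i)).
\]
We need a nested coupling of $p^-=\marg_{K,l}(p_i)$ and $p^+=\marg_{K,l-1}(p_i)$. The natural one is the image $\lambda_i\coloneqq p_i\circ(\id_K,\proj_{-i,l},\proj_{-i,l-1})^{-1}$ on $K\times\mathcal B_{-i}\times\mathcal B_{-i}$. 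Its two marginals are $p^-$ and $p^+$ by construction.

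Nestedness requires that $p_i$-almost every $s_{-i}^{m-1}$ satisfy $s_{j,l}\subseteq s_{j,l-1}$ for every $j\neq i$. Since $l\le m-1$, this is exactly the induction hypothesis applied at order $m-1$ to each opponent $j$; when $m-1=0$ the condition is vacuous. Lemma~\ref{MonoBR} then gives $s_{i,l+1}\subseteq s_{i,l}$, which completes the induction.

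For the consequence, take $s_i\in\mathcal S_i$. Every prefix $s_i^m$ lies in $\mathcal S_i^m$, so applying the claim to each prefix shows the sequence is decreasing. Because $A_i$ is finite, the sequence changes at most $|A_i|-1$ times.

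The only real subtlety is the indexing. The induction hypothesis must be stated for all players at once, because the coupling for player $i$ uses opponents' monotonicity at order $m-1$. One must also check that measurability poses no issue, which holds because every space involved is finite or countable.
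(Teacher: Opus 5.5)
Your proof is correct and follows the paper's argument. Both argue by induction on the order and apply Lemma~\ref{MonoBR} to the coupling of the level-$l$ and level-$(l-1)$ opponent marginals induced by a single witnessing belief, with nestedness supplied by the inductive hypothesis for the opponents. The only cosmetic difference is that you re-derive every inclusion $1\le l<m$ from one witness, whereas the paper proves only the newest inclusion and inherits the earlier ones from the prefix lying in $\mathcal S_i^{m}$.
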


\begin{proof}
We proceed by induction on $m$. At $m=1$, the assertion follows from $s_{i,0}=A_i$. Suppose the claim holds at order $m$ and fix $s_i^{m+1}\in\mathcal S_i^{m+1}$. The earlier inclusions are already contained in the prefix $s_i^m$, so only the inclusion between rounds $m$ and $m+1$ remains.

Choose $p_i^{m+1}\in\Delta_{K\times\mathcal S_{-i}^m}$ generating $s_i^{m+1}$. By the induction hypothesis, every opponent description in its support satisfies $s_{j,m}\subseteq s_{j,m-1}$. The induced distribution on $(k,s_{-i,m},s_{-i,m-1})$ is therefore a nested coupling of $\marg_{K,m}(p_i^{m+1})$ and $\marg_{K,m-1}(p_i^{m+1})$. Lemma \ref{MonoBR} gives
$$
\BR_i(\marg_{K,m}(p_i^{m+1}))
\subseteq
\BR_i(\marg_{K,m-1}(p_i^{m+1})).
$$
By definition, the two sides are $s_{i,m+1}$ and $s_{i,m}$.
\end{proof}

\paragraph{Lemma \ref{thm:ICR_hierarchies}}\emph{
\begin{itemize}
\item[(i)] A tuple $s^m\in\mathcal B^{m+1}$ belongs to $\mathcal S^m$ if and only if there exist a Harsanyi type space $(T,\pi)$ and a type profile $t\in T$ such that $s^m=(\ICR^\ell(t))_{\ell\leq m}$.
\item[(ii)] A sequence $s\in\mathcal B^\mathbb N$ belongs to $\mathcal S$ if and only if there exist a Harsanyi type space $(T,\pi)$ and a type profile $t\in T$ such that $s=(\ICR^\ell(t))_{\ell\geq0}$.
\end{itemize}
}

\begin{proof}
\smallskip\noindent\emph{Preliminary facts.}
Every hierarchy in $\mathcal S_i$ is decreasing by Claim \ref{Monoghier}. For every $p_i\in\Delta_{K\times\mathcal S_{-i}}$, the hierarchy $\beta_i(p_i)$ belongs to $\mathcal S_i$, because the projection of $p_i$ on $K\times\mathcal S_{-i}^{m-1}$ generates its first $m+1$ coordinates. Finally, every $\mathcal S_i^m$ is finite and nonempty. Nonemptiness follows inductively by choosing any belief on the finite nonempty set $K\times\mathcal S_{-i}^{m-1}$ and applying \eqref{Eq_BRSTS}.

\smallskip\noindent\emph{Part \emph{(i)}: finite hierarchies.}
Suppose first that $t$ is a type profile in a Harsanyi type space. Its order-zero description is $(A_i)_i\in\mathcal S^0$. If its order-$(m-1)$ description belongs to $\mathcal S^{m-1}$, the distribution induced by $\pi_i(t_i)$ on nature and opponents' order-$(m-1)$ descriptions satisfies \eqref{BRhierarchylevelm}. Induction gives that its order-$m$ description belongs to $\mathcal S^m$ for every $m$.

Conversely, we show by induction on $m$ that every $s^m\in\mathcal S^m$ is realized in a finite Harsanyi type space. The assertion is immediate at $m=0$. Fix $m\geq1$ and suppose it holds at order $m-1$. For every player $i$, choose a belief
$$
p_i^m\in\Delta_{K\times\mathcal S_{-i}^{m-1}}
$$
that generates $s_i^m$. Only finitely many opponent descriptions occur in these beliefs. Complete each of them to a profile in $\mathcal S^{m-1}$ and, by the induction hypothesis, choose a finite type space and a type profile realizing that completion.

Combine these finitely many spaces by taking disjoint copies of each player's types; old types retain their original beliefs inside their copy. Add one new type $t_i^*$ for each player. For every opponent description $r_{-i}^{m-1}$ used by $p_i^m$, choose a profile of opponent types in the combined space that realizes it. Give $t_i^*$ the probability that $p_i^m$ assigns to each state and corresponding opponent description. The induced distribution of the state and opponents' order-$(m-1)$ descriptions is therefore exactly $p_i^m$, including all its correlations. These beliefs can be specified separately for each player because no common prior is imposed. Equation \eqref{BRhierarchylevelm} now gives
$$
(\ICR_i^0(t_i^*),\ldots,\ICR_i^m(t_i^*))=s_i^m
$$
for every $i$, proving part \emph{(i)}.

In particular, every $s_i^m\in\mathcal S_i^m$ extends to an element of $\mathcal S_i$. Complete $s_i^m$ to a profile in $\mathcal S^m$ and realize that profile as above. The complete ICR hierarchy of the corresponding type extends $s_i^m$, and all its finite prefixes belong to the recursively constructed sets by the first direction of part \emph{(i)}.

\smallskip\noindent\emph{Maximal continuations.}
We compare continuations by how long their actions survive. A continuation is larger if it contains every action surviving in the smaller continuation at every round. For $s_i,\hat s_i\in\mathcal S_i$, write
$$
s_i\preceq_i\hat s_i
\quad\Longleftrightarrow\quad
s_{i,n}\subseteq\hat s_{i,n}
\text{ for every }n.
$$
For $s_i^m\in\mathcal S_i^m$, let
$$
\mathcal C_i(s_i^m)
\coloneqq
\{\hat s_i\in\mathcal S_i:\hat s_i^m=s_i^m\},
$$
denote its $\preceq_i$-maximal elements by
$\bar{\mathcal S}_i^m(s_i^m)$, and set
$$
\bar{\mathcal S}_i^m
\coloneqq
\bigcup_{s_i^m\in\mathcal S_i^m}
\bar{\mathcal S}_i^m(s_i^m).
$$

We need two properties: $\bar{\mathcal S}_i^m(s_i^m)$ is finite and nonempty, and every continuation in $\mathcal C_i(s_i^m)$ is dominated by one of its elements. Give $\mathcal B_i^{\mathbb N}$ the product topology. The set $\mathcal S_i$ is closed, since membership is determined by its finite truncations, and is therefore compact. The same holds for $\mathcal C_i(s_i^m)$.

A decreasing hierarchy is identified with its vector of deletion times
$$
\tau_a(s_i)\coloneqq
\inf\{n\in\mathbb N:a\notin s_{i,n}\},
\qquad a\in A_i,
$$
where $\inf\varnothing=\infty$. Under this identification,
$s_i\preceq_i\hat s_i$ if and only if
$\tau_a(s_i)\leq\tau_a(\hat s_i)$ for every $a$. To see that any set of deletion-time vectors has finitely many maximal elements, partition these elements according to the coordinates equal to $\infty$. There are finitely many such patterns. Within each pattern, the remaining coordinates lie in a finite power of $\mathbb N$, whose antichains are finite by Dickson's lemma. Distinct maximal elements are incomparable, so each pattern contains only finitely many of them.

Fix $s_i\in\mathcal C_i(s_i^m)$. On the compact set
$$
\{\hat s_i\in\mathcal C_i(s_i^m):s_i\preceq_i\hat s_i\},
$$
maximize
$$
\sum_{a\in A_i}\varphi(\tau_a),
\qquad
\varphi(n)=1-2^{-n},\quad \varphi(\infty)=1.
$$
Here $\mathbb N\cup\{\infty\}$ carries the topology in which $n\to\infty$. The deletion-time map is continuous because fixing a finite deletion time, or requiring it to exceed a given round, imposes only finitely many coordinate conditions. The displayed function is also continuous and strictly increases whenever some action survives longer and none survives for less time. A maximizer is therefore maximal in $\mathcal C_i(s_i^m)$ and dominates $s_i$. This proves both required properties. Since $\mathcal S_i^m$ is finite, $\bar{\mathcal S}_i^m$ is finite as well.

\smallskip\noindent\emph{Uniform stabilization.}
There exists $M\in\mathbb N$ such that, for every player $i$, every $m$, and every
$s_i\in\bar{\mathcal S}_i^m$, the hierarchy $s_i$ is constant from round $m+M$ onward. Since there are finitely many players, each $\bar{\mathcal S}_i^0$ is finite, and every decreasing hierarchy is eventually constant, a common $M$ exists at $m=0$. Suppose the assertion holds at $m-1$, and fix
$s_i\in\bar{\mathcal S}_i^m$. The idea is to replace each opponent's continuation by a maximal one agreeing through round $m-1$. By induction, these continuations stop changing by round $m-1+M$, so the player's best replies stop changing one round later.

Formally, let $N=m+M$. By the definition of $\mathcal S_i^N$, there is a belief on
$K\times\mathcal S_{-i}^{N-1}$ that generates $s_i^N$. Extend each opponent description in its finite support to a complete hierarchy. This gives a finitely supported belief
$p_i\in\Delta_{K\times\mathcal S_{-i}}$ satisfying
$$
d_i^N(\beta_i(p_i))=s_i^N.
$$

For every $(k,x_{-i})$ in the support of $p_i$ and every $j\neq i$, choose
$$
y_j\in\bar{\mathcal S}_j^{m-1}(x_j^{m-1})
\quad\text{with}\quad
x_j\preceq_j y_j.
$$
Replace each full support point jointly by
$(k,x_{-i})\mapsto(k,y_{-i})$, and denote the resulting belief by
$p_i^+$. The joint distribution of the old and new profiles preserves the state and is supported on nested action-set profiles at every round. Lemma \ref{MonoBR} therefore gives
$$
\beta_i(p_i)_n\subseteq\beta_i(p_i^+)_n
\quad\text{for every }n,
$$
with equality through round $m$ because every replacement preserves the order-$(m-1)$ opponent description. Since $\beta_i(p_i)$ agrees with $s_i$ through round $N$, it follows that $s_{i,n}\subseteq\beta_i(p_i^+)_n$ for every $n\leq N$.

By the induction hypothesis, every hierarchy in the support of $p_i^+$ is constant from round $m-1+M$ onward. Consequently,
$r_i\coloneqq\beta_i(p_i^+)$ is constant from round $m+M=N$ onward. For $n>N$, monotonicity of the hierarchy and constancy of $r_i$ give
$$
s_{i,n}\subseteq s_{i,N}\subseteq r_{i,N}=r_{i,n}.
$$
Thus $s_i\preceq_i r_i$. Moreover, $r_i\in\mathcal S_i$ and
$r_i^m=s_i^m$. Maximality of $s_i$ in $\mathcal C_i(s_i^m)$ implies
$r_i=s_i$. This completes the induction.

\smallskip\noindent\emph{One belief for every order.}
We can now prove that $\beta_i$ is onto. Fix $s_i\in\mathcal S_i$, and choose $m\geq1$ such that $s_i$ is constant from round $m$ onward. Let $N=m+M$ and extend a belief generating $s_i^N$ to a finitely supported belief
$p_i\in\Delta_{K\times\mathcal S_{-i}}$, and replace every hierarchy in its support by a dominating maximal continuation after round $m-1$, exactly as above. Let $p_i^+$ be the resulting belief and
$r_i=\beta_i(p_i^+)$. Then $r_i^m=s_i^m$, and
$s_{i,n}\subseteq r_{i,n}$ for every $n\leq N$. Since $s_i$ is constant after $m$ and $r_i$ is decreasing,
$$
r_{i,n}\subseteq r_{i,m}=s_{i,m}=s_{i,n}
\qquad\text{for every }n\geq m.
$$
Hence $r_i$ and $s_i$ agree through round $N$. The support of $p_i^+$ consists of maximal continuations after $m-1$, so $r_i$ is constant from round $m+M=N$ onward. It follows that $r_i=s_i$, proving that $\beta_i$ is onto.

\smallskip\noindent\emph{Part \emph{(ii)}: complete hierarchies.}
The direction from Harsanyi type spaces to $\mathcal S$ follows from the finite-order argument. For the converse, each $\mathcal S_i$ is countable, since a decreasing hierarchy is determined by the finitely many deletion times $(\tau_a)_{a\in A_i}$. For every $s_i\in\mathcal S_i$, choose
$$
p_i(s_i)\in\beta_i^{-1}(s_i).
$$
Take $T_i=\mathcal S_i$ and define $\pi_i(s_i)=p_i(s_i)$. Since $\mathcal S_i$ carries the full power set, these selections are measurable. An induction on $m$ gives
$$
\ICR_i^m(s_i)=s_{i,m}
$$
for every player, every $s_i\in\mathcal S_i$, and every $m$. Hence this single Harsanyi type space realizes every profile in $\mathcal S$.
\end{proof}

\paragraph{Lemma \ref{RepThm1a}} \emph{$(\mathcal S,\beta,d)$ is a Strategic Quotient.}

\begin{proof}
We verify in turn that $\beta$ is measurable and onto, that every Harsanyi type space maps to $(\mathcal S,\beta)$, and that $d$ is a strategic-description tower.

Lemma \ref{thm:ICR_hierarchies} establishes that each $\beta_i$ is onto. Each finite coordinate of $\beta_i$ is measurable by the argument of \cite[Lemma 1]{dekel2007interim}: taking marginals is measurable, and $\{p_i:\BR_i(p_i)=b_i\}$ is Borel on the finite-dimensional simplex for every $b_i\in\mathcal B_i$. Since $\mathcal S_i$ carries the $\sigma$-algebra generated by its coordinate projections, $\beta_i$ is measurable.

Let $(T,\pi)$ be a Harsanyi type space and define
$$
\chi_i(t_i)\coloneqq
\bigl(\ICR_i^m(t_i)\bigr)_{m\geq0}.
$$
Lemma \ref{thm:ICR_hierarchies} implies that $\chi_i(t_i)\in\mathcal S_i$, and \cite[Lemma 1]{dekel2007interim} implies that $\chi_i$ is measurable. If
$$
p_i\coloneqq
\pi_i(t_i)\circ(\id_K,\chi_{-i})^{-1},
$$
then the definitions of ICR and $\beta_i$ give
$$
d_i^m\bigl(\beta_i(p_i)\bigr)
=
d_i^m\bigl(\chi_i(t_i)\bigr)
\qquad\text{for every }m.
$$
Elements of $\mathcal S_i$ are complete coordinate sequences, so these equalities imply $\beta_i(p_i)=\chi_i(t_i)$. Hence the diagram in Definition \ref{TS} commutes and $(\mathcal S,\beta)$ is a type space quotient.

The maps $d_i^m$ are measurable coordinate projections. By the definition of $\beta_i$, for every $p_i\in\Delta_{K\times\mathcal S_{-i}}$,
$$
d_i^{m+1}\bigl(\beta_i(p_i)\bigr)
=
\left(
d_i^m\bigl(\beta_i(p_i)\bigr),
\BR_i\bigl(p_i\circ(\id_K,b_{-i}^m)^{-1}\bigr)
\right).
$$
Hence $d$ is a strategic-description tower.
\end{proof}

\paragraph{Theorem \ref{CoroEx}} \emph{The Strategic Quotient $(\mathcal S,\beta,d)$ is terminal. Consequently, the STS exists and is unique up to a unique isomorphism.}

\begin{proof}
Let $\widetilde{\mathscr S}=(\widetilde{\mathcal S},\widetilde\beta,\widetilde d)$ be a Strategic Quotient, and let $\widetilde b_i^m$ denote the last coordinate of $\widetilde d_i^m$. Define
$$
\zeta_i(\widetilde s_i)
\coloneqq
\bigl(\widetilde b_i^m(\widetilde s_i)\bigr)_{m\geq0}.
$$
Thus $\zeta_i$ sends a reduced type to its complete strategic-description tower. We show that this map is well defined and measurable, preserves beliefs and strategic descriptions, is onto, and is the only map with these properties.

We verify by induction on $m$ that every finite prefix belongs to $\mathcal S_i^m$. The assertion is immediate at order zero. For $m\geq1$, suppose it holds at order $m-1$ and fix $\widetilde s_i$. Because $\widetilde\beta_i$ is onto, choose $\widetilde p_i$ with $\widetilde\beta_i(\widetilde p_i)=\widetilde s_i$. By the induction hypothesis, $\widetilde d_{-i}^{m-1}$ takes values in $\mathcal S_{-i}^{m-1}$, so $(\id_K,\widetilde d_{-i}^{m-1})$ induces from $\widetilde p_i$ a belief on $K\times\mathcal S_{-i}^{m-1}$. The tower equations at orders $0,\ldots,m-1$ give the coordinates of $\widetilde d_i^m(\widetilde s_i)$. Thus $\widetilde d_i^m(\widetilde s_i)\in\mathcal S_i^m$, and $\zeta_i(\widetilde s_i)\in\mathcal S_i$.

The map $\zeta_i$ is measurable because each coordinate is measurable and the $\sigma$-algebra on $\mathcal S_i$ is generated by the coordinate projections.

By construction,
$$
d_i^m\circ\zeta_i=\widetilde d_i^m.
$$
For $\widetilde p_i\in\Delta_{K\times\widetilde{\mathcal S}_{-i}}$, set
$$
p_i\coloneqq
\widetilde p_i\circ(\id_K,\zeta_{-i})^{-1}.
$$
The order-zero coordinates of $\zeta_i(\widetilde\beta_i(\widetilde p_i))$ and $\beta_i(p_i)$ both equal $A_i$. At every order $m+1$, the tower property and $\widetilde b_{-i}^m=b_{-i}^m\circ\zeta_{-i}$ give
\begin{align*}
\widetilde b_i^{m+1}\bigl(\widetilde\beta_i(\widetilde p_i)\bigr)
&=
\BR_i\bigl(
\widetilde p_i\circ(\id_K,\widetilde b_{-i}^m)^{-1}
\bigr)\\
&=
\BR_i\bigl(
p_i\circ(\id_K,b_{-i}^m)^{-1}
\bigr)
=
b_i^{m+1}\bigl(\beta_i(p_i)\bigr).
\end{align*}
Thus the quotient diagram in Definition \ref{reduction} commutes.

It remains to show that $\zeta_i$ is onto. Fix $s_i\in\mathcal S_i$. By Lemma \ref{thm:ICR_hierarchies}, some type $t_i$ in some Harsanyi type space has ICR hierarchy $s_i$. Let $\widetilde\chi$ be the maps from that type space to $\widetilde{\mathcal S}$ supplied by Definition \ref{TS}. Lemma \ref{StratThm1} gives
$$
\widetilde d_i^m\bigl(\widetilde\chi_i(t_i)\bigr)=s_i^m
\qquad\text{for every }m,
$$
so $\zeta_i(\widetilde\chi_i(t_i))=s_i$. Hence $\zeta$ is a morphism.

Finally, if $\widehat\zeta$ is another morphism from $\widetilde{\mathscr S}$ to $(\mathcal S,\beta,d)$, then
$$
d_i^m\circ\widehat\zeta_i
=
\widetilde d_i^m
=
d_i^m\circ\zeta_i
\qquad\text{for every }m.
$$
Elements of $\mathcal S_i$ are complete sequences of finite-order descriptions, so $\widehat\zeta_i=\zeta_i$. Therefore $(\mathcal S,\beta,d)$ is terminal. The uniqueness conclusion follows from the uniqueness of terminal objects.
\end{proof}

\subsubsection{Finiteness of the Continuation State Space}\label{FiniteSTSProof}

For $s_i\in\mathcal S_i$, let
$$
s_{i,\infty}\coloneqq\bigcap_{n\in\naturals}s_{i,n}.
$$
By Claim \ref{Monoghier}, the sequence $(s_{i,n})_n$ is decreasing. Since $A_i$ is finite, it is eventually constant at $s_{i,\infty}$.
Let $\proj_\infty(s_i)=s_{i,\infty}$ and let $\proj_{-i,\infty}$ denote the corresponding product map.

\begin{lemma}[Limit best replies]\label{LimitBR}
Let $p_i\in\Delta_{K\times\mathcal S_{-i}}$. For $n\in\naturals$, let
$$
p_i^n
\coloneqq
p_i\circ(\id_K,\proj_{-i,n})^{-1},
\qquad
p_i^\infty
\coloneqq
p_i\circ(\id_K,\proj_{-i,\infty})^{-1}.
$$
Then
$$
\beta_i(p_i)_\infty=\BR_i(p_i^\infty).
$$
\end{lemma}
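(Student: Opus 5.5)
The plan is to use the recursion $\beta_i(p_i)_{n+1}=\BR_i(p_i^n)$, which is immediate from \eqref{Eq_BRSTS} and the definition of $\beta_i$ through truncations. Since $\beta_i(p_i)\in\mathcal S_i$ is decreasing by Claim \ref{Monoghier}, it is eventually constant at $\beta_i(p_i)_\infty$. Hence $\beta_i(p_i)_\infty=\BR_i(p_i^n)$ for all sufficiently large $n$. It therefore suffices to prove two inclusions relating $\BR_i(p_i^n)$ for large $n$ to $\BR_i(p_i^\infty)$.

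For the inclusion $\BR_i(p_i^\infty)\subseteq\beta_i(p_i)_\infty$, I will use Lemma \ref{MonoBR}. Fix $n$. Push $p_i$ forward under $(k,x_{-i})\mapsto(k,\proj_{-i,\infty}(x_{-i}),\proj_{-i,n}(x_{-i}))$. This gives a distribution on $K\times\mathcal B_{-i}\times\mathcal B_{-i}$ with marginals $p_i^\infty$ and $p_i^n$. Every opponent hierarchy is decreasing, so $x_{j,\infty}\subseteq x_{j,n}$ holds surely, and the pushforward is a nested coupling. Lemma \ref{MonoBR} then gives $\BR_i(p_i^\infty)\subseteq\BR_i(p_i^n)$ for every $n$. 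Taking $n$ large yields the inclusion.

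For the reverse inclusion, I will use continuity, and I expect this to be the main step. The support of $p_i$ may be an infinite countable set. So there need not be a single round after which every opponent hierarchy in the support has stabilized, and a direct coupling argument cannot give equality at a finite $n$. Instead I will argue that $p_i^n\to p_i^\infty$ in the finite-dimensional simplex $\Delta_{K\times\mathcal B_{-i}}$. For each fixed $(k,b_{-i})$, the indicator of $\{(k',x_{-i}):k'=k,\ \proj_{-i,n}(x_{-i})=b_{-i}\}$ converges pointwise on $K\times\mathcal S_{-i}$ to the indicator of the event with $\proj_{-i,\infty}$ in place of $\proj_{-i,n}$. This holds because each $x_j$ is eventually equal to $x_{j,\infty}$, and there are finitely many opponents. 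Dominated (or bounded) convergence gives convergence of every coordinate of $p_i^n$.

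Now take $a_i\in\beta_i(p_i)_\infty$. Then $a_i\in\BR_i(p_i^n)$ for all large $n$, that is, $p_i^n\in C_{i,a_i}$. The appendix preliminaries show that $C_{i,a_i}$ is closed, so its limit satisfies $p_i^\infty\in C_{i,a_i}$. Hence $a_i\in\BR_i(p_i^\infty)$. Combining the two inclusions gives $\beta_i(p_i)_\infty=\BR_i(p_i^\infty)$.
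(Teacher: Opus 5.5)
Your proposal is correct and follows the paper's proof essentially step for step: a nested coupling of $p_i^\infty$ and $p_i^n$ through Lemma \ref{MonoBR} gives $\BR_i(p_i^\infty)\subseteq\BR_i(p_i^n)$, and the convergence $p_i^n\to p_i^\infty$ by dominated convergence, together with closedness of $\{p:a_i\in\BR_i(p)\}$, gives the reverse inclusion. The only cosmetic difference is that you prove coordinatewise convergence on the finite simplex, whereas the paper bounds the total-variation distance by $p_i\{s_{-i,n}\neq s_{-i,\infty}\}$; on a finite simplex these are equivalent.
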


\begin{proof}
We first use closedness to show that every action surviving at all sufficiently large finite rounds remains a best reply under the limit marginal. Monotonicity gives the reverse inclusion.

The distributions $p_i^n$ and $p_i^\infty$ are obtained from the same draw under $p_i$. Hence
$$
\lVert p_i^n-p_i^\infty\rVert_{\mathrm{TV}}
\leq
p_i\{s_{-i,n}\neq s_{-i,\infty}\}.
$$
Every hierarchy is eventually constant, so the indicator of the event on the right converges pointwise to zero. Dominated convergence gives $p_i^n\to p_i^\infty$ in total variation.

For every $a_i\in A_i$, the set
$$
\{p\in\Delta_{K\times\mathcal B_{-i}}:a_i\in\BR_i(p)\}
$$
is closed. Indeed, it is the projection on $p$ of the compact set of pairs $(p,q)$ satisfying $q\in\mathcal Q_i(p)$ and
$$
\mathbb E_q[u_i(a_i,a_{-i},k)]
\geq
\mathbb E_q[u_i(a_i',a_{-i},k)]
\quad\text{for every }a_i'\in A_i.
$$
The projection is compact and hence closed.

Let $b_i=\beta_i(p_i)_\infty$. For all sufficiently large $n$,
$\BR_i(p_i^n)=\beta_i(p_i)_{n+1}=b_i$, because the hierarchy $\beta_i(p_i)$ is eventually constant. Closedness therefore gives
$b_i\subseteq\BR_i(p_i^\infty)$. Conversely, the same draw from $p_i$ couples $p_i^\infty$ and $p_i^n$ and satisfies
$s_{j,\infty}\subseteq s_{j,n}$ for every $j\neq i$. Lemma \ref{MonoBR} gives
$$
\BR_i(p_i^\infty)\subseteq\BR_i(p_i^n)=b_i
$$
for all sufficiently large $n$.
\end{proof}

We now prove finiteness by placing the continuation problem in finite domains. This reduction uses the already constructed hierarchy space, including the base sets $R^{0,\infty}$ below; it does not assert a practical algorithm from the payoff matrix. The sets $R^{q,\infty}$ record the action sets that can occur jointly at selected rounds and in the limit coordinate. The map $\Phi_r$ advances every selected round by one, while $\iota_r$ inserts round zero. Two round counts will be treated as equivalent when they have the same effect under every relevant update. Finally, the first and last rounds of the constant blocks provide a finite summary of any prefix.

We first record several coordinates of a hierarchy simultaneously. Repeated rounds are allowed. For $q\geq0$,
$\mathbf n=(n_1,\ldots,n_q)$ with
$0\leq n_1\leq\cdots\leq n_q$, and
$x_i\in\mathcal B_i^{\mathbb N}$, define
$$
\proj_i^{\mathbf n,\infty}(x_i)
\coloneqq
(x_{i,n_1},\ldots,x_{i,n_q},x_{i,\infty}),
\qquad
\proj_{-i}^{\mathbf n,\infty}
\coloneqq
(\proj_j^{\mathbf n,\infty})_{j\neq i}.
$$
Then
$$
R_i^{q,\infty}(\mathbf n)
\coloneqq
\proj_i^{\mathbf n,\infty}(\mathcal S_i).
$$
For $q=0$, write
$$
R_i^{0,\infty}
\coloneqq
\{s_{i,\infty}:s_i\in\mathcal S_i\}.
$$
Thus $R_i^{q,\infty}$ has $q+1$ coordinates, including the limit coordinate. Let
$R^{q,\infty}(\mathbf n)
=(R_i^{q,\infty}(\mathbf n))_{i\in I}$.

For $r\geq1$, let
$$
\mathfrak R^r
\coloneqq
\prod_{i\in I}
\left(2^{\mathcal B_i^r}\setminus\{\varnothing\}\right).
$$
Fix $D=(D_i)_i\in\mathfrak R^r$.
For $1\leq\ell\leq r$, the product projection is
$\proj_{-i,\ell}(x_{-i})=(x_j^\ell)_{j\neq i}$.
For $\mu_i\in\Delta_{K\times D_{-i}}$, let
$$
\mu_i^\ell
\coloneqq
\mu_i\circ
(\id_K,\proj_{-i,\ell})^{-1}.
$$
Define $\Phi_r:\mathfrak R^r\to\mathfrak R^r$ by
\begin{equation}\label{CoordinateOperator}
\Phi_r(D)_i
\coloneqq
\left\{
\bigl(\BR_i(\mu_i^1),\ldots,\BR_i(\mu_i^r)\bigr):
\mu_i\in\Delta_{K\times D_{-i}}
\right\}.
\end{equation}
For $r\geq2$, define
$\iota_r:\mathfrak R^{r-1}\to\mathfrak R^r$ by
\begin{equation}\label{ZeroInsertion}
(\iota_rD)_i
\coloneqq
\left\{
(A_i,x_i^1,\ldots,x_i^{r-1}):
(x_i^1,\ldots,x_i^{r-1})\in D_i
\right\}.
\end{equation}

\begin{lemma}[Coordinate recursion]\label{CoordinateRecursion}
For every $q\geq1$ and
$0\leq n_1\leq\cdots\leq n_q$,
\begin{equation}\label{CoordinateShift}
R^{q,\infty}(n_1+1,\ldots,n_q+1)
=
\Phi_{q+1}
\bigl(R^{q,\infty}(n_1,\ldots,n_q)\bigr).
\end{equation}
Moreover,
\begin{equation}\label{CoordinateZero}
R^{q,\infty}(0,n_2,\ldots,n_q)
=
\iota_{q+1}
\bigl(R^{q-1,\infty}(n_2,\ldots,n_q)\bigr).
\end{equation}
The empty argument on the right is omitted when $q=1$.
\end{lemma}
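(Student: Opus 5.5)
The shift identity \eqref{CoordinateShift} will be proved by a double inclusion. It rests on three earlier facts: $\beta_i$ is onto $\mathcal S_i$ (Lemma \ref{thm:ICR_hierarchies}), $\beta_i$ takes values in $\mathcal S_i$ (the preliminary facts in that proof), and the limit coordinate of $\beta_i(p_i)$ is computed by Lemma \ref{LimitBR}. The zero-insertion identity \eqref{CoordinateZero} is essentially bookkeeping. Throughout, $\Phi_{q+1}$ acts on $(q+1)$-tuples, and its last index $\ell=q+1$ plays the role of the limit coordinate.

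\emph{Inclusion $\subseteq$ in \eqref{CoordinateShift}.} Fix $s_i\in\mathcal S_i$ and write $\mathbf n=(n_1,\ldots,n_q)$.
\begin{itemize}
\item[(a)] By surjectivity, choose $p_i\in\Delta_{K\times\mathcal S_{-i}}$ with $\beta_i(p_i)=s_i$.
\item[(b)] By the definition of $\beta_i$, for each $l$ we have $s_{i,n_l+1}=\BR_i\bigl(p_i\circ(\id_K,\proj_{-i,n_l})^{-1}\bigr)$.
\item[(c)] By Lemma \ref{LimitBR}, $s_{i,\infty}=\BR_i\bigl(p_i\circ(\id_K,\proj_{-i,\infty})^{-1}\bigr)$.
\item[(d)] Set $\mu_i\coloneqq p_i\circ(\id_K,\proj_{-i}^{\mathbf n,\infty})^{-1}$. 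This is a belief on $K\times R_{-i}^{q,\infty}(\mathbf n)$, since $R_{-i}^{q,\infty}(\mathbf n)=\prod_{j\neq i}\proj_j^{\mathbf n,\infty}(\mathcal S_j)$.
\item[(e)] The $\ell$-th marginal $\mu_i^\ell$ equals $p_i\circ(\id_K,\proj_{-i,n_\ell})^{-1}$ for $\ell\leq q$, and equals the limit marginal for $\ell=q+1$.
\end{itemize}
Hence $\proj_i^{(n_1+1,\ldots,n_q+1),\infty}(s_i)\in\Phi_{q+1}(R^{q,\infty}(\mathbf n))_i$.

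\emph{Inclusion $\supseteq$ in \eqref{CoordinateShift}.} Take $\mu_i\in\Delta_{K\times D_{-i}}$ with $D=R^{q,\infty}(\mathbf n)$.
\begin{itemize}
\item[(a)] Each $D_j$ is finite and equals $\proj_j^{\mathbf n,\infty}(\mathcal S_j)$. So for every $x_j\in D_j$, choose one $\sigma_j(x_j)\in\mathcal S_j$ with $\proj_j^{\mathbf n,\infty}(\sigma_j(x_j))=x_j$.
\item[(b)] Choosing separately per opponent is legitimate because $D_{-i}$ is a product.
\item[(c)] Push $\mu_i$ forward by $(\id_K,\sigma_{-i})$ to obtain a finitely supported $p_i\in\Delta_{K\times\mathcal S_{-i}}$. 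Measurability is automatic on countable spaces.
\item[(d)] Then $p_i\circ(\id_K,\proj_{-i}^{\mathbf n,\infty})^{-1}=\mu_i$.
\item[(e)] By (b)--(c) of the first inclusion and Lemma \ref{LimitBR}, $r_i\coloneqq\beta_i(p_i)\in\mathcal S_i$ satisfies $r_{i,n_\ell+1}=\BR_i(\mu_i^\ell)$ for $\ell\leq q$ and $r_{i,\infty}=\BR_i(\mu_i^{q+1})$.
\end{itemize}
This exhibits the given tuple as an element of $R_i^{q,\infty}(n_1+1,\ldots,n_q+1)$.

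\emph{Identity \eqref{CoordinateZero}.} Every $s_i\in\mathcal S_i$ has $s_{i,0}=A_i$. Therefore $\proj_i^{(0,n_2,\ldots,n_q),\infty}(s_i)=(A_i,\proj_i^{(n_2,\ldots,n_q),\infty}(s_i))$. Taking images over $\mathcal S_i$ gives exactly $\iota_{q+1}$ applied to $R^{q-1,\infty}(n_2,\ldots,n_q)$; for $q=1$ the right-hand set is $R_i^{0,\infty}$.

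\emph{Main obstacle.} The only delicate step is the reverse inclusion of \eqref{CoordinateShift}. There one must lift an arbitrary belief on projected tuples to a belief on full hierarchies without changing any of the finitely many recorded marginals, including the limit one. This works because:
\begin{itemize}
\item the lift is a deterministic section, so all recorded marginals are preserved exactly;
\item $\beta_i$ lands in $\mathcal S_i$;
\item Lemma \ref{LimitBR} converts the limit marginal into the limit coordinate.
\end{itemize}
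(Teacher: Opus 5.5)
Your proposal is correct and follows essentially the same route as the paper. For \eqref{CoordinateShift} both arguments use a double inclusion: surjectivity of $\beta_i$ plus Lemma \ref{LimitBR} for $\subseteq$, and for $\supseteq$ a coordinatewise section of $\proj_j^{\mathbf n,\infty}$ that lifts $\mu_i$ to a belief on $K\times\mathcal S_{-i}$ with the same recorded marginals. Identity \eqref{CoordinateZero} then follows, in both, from $s_{i,0}=A_i$.
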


\begin{proof}
Let $\mathbf n=(n_1,\ldots,n_q)$. We prove the two inclusions in \eqref{CoordinateShift}. Fix a player $i$. For the inclusion from left to right, take $s_i\in\mathcal S_i$ and choose
$p_i\in\Delta_{K\times\mathcal S_{-i}}$ such that
$\beta_i(p_i)=s_i$. Let
$$
\mu_i
\coloneqq
p_i\circ(\id_K,\proj_{-i}^{\mathbf n,\infty})^{-1}.
$$
This distribution is supported on
$K\times R_{-i}^{q,\infty}(n_1,\ldots,n_q)$.
Its first $q$ action-set marginals generate
$(s_{i,n_1+1},\ldots,s_{i,n_q+1})$, and its limit marginal generates $s_{i,\infty}$ by Lemma \ref{LimitBR}. This proves the first inclusion.

For the reverse inclusion, take
$y_i\in\Phi_{q+1}(R^{q,\infty}(\mathbf n))_i$. By the definition of $\Phi_{q+1}$, some distribution
$$
\mu_i\in\Delta_{K\times R_{-i}^{q,\infty}(\mathbf n)}
$$
generates the coordinates of $y_i$.
For every player $j\neq i$ and every tuple
$x_j\in R_j^{q,\infty}(\mathbf n)$, choose
$\rho_j(x_j)\in\mathcal S_j$ such that
$\proj_j^{\mathbf n,\infty}(\rho_j(x_j))=x_j$. Let
$\rho_{-i}=(\rho_j)_{j\neq i}$ and define
$$
p_i
\coloneqq
\mu_i\circ(\id_K,\rho_{-i})^{-1}.
$$
Since each $\rho_j$ is a section of the coordinate projection,
$$
p_i\circ(\id_K,\proj_{-i}^{\mathbf n,\infty})^{-1}
=
\mu_i.
$$
Thus
$(\beta_i(p_i)_{n_1+1},\ldots,\beta_i(p_i)_{n_q+1})$ gives the first $q$ components of $y_i$, and Lemma \ref{LimitBR} gives $\beta_i(p_i)_\infty$, its final component. Hence
$y_i\in R_i^{q,\infty}(n_1+1,\ldots,n_q+1)$, proving the reverse inclusion. The construction is performed separately for each player; no common-prior condition links the resulting beliefs.

Equation \eqref{CoordinateZero} follows from $s_{i,0}=A_i$.
\end{proof}

\begin{lemma}[Gap representation]\label{GapRepresentation}
Let $q\geq1$ and
$0\leq n_1\leq\cdots\leq n_q$. Set
$$
g_1=n_1,
\qquad
g_\ell=n_\ell-n_{\ell-1}
\quad\text{for }2\leq\ell\leq q.
$$
Then
\begin{equation}\label{GapRepresentationFormula}
R^{q,\infty}(n_1,\ldots,n_q)
=
\left(
\Phi_{q+1}^{g_1}
\circ\iota_{q+1}
\circ\Phi_q^{g_2}
\circ\iota_q
\circ\cdots\circ
\Phi_2^{g_q}
\circ\iota_2
\right)
(R^{0,\infty}).
\end{equation}
\end{lemma}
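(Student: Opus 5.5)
The plan is an induction on $q$ that applies the two identities of Lemma \ref{CoordinateRecursion} repeatedly. Equation \eqref{CoordinateShift} lets us lower all selected rounds by one at a time. Equation \eqref{CoordinateZero} lets us delete a selected round once it reaches zero. The gaps $g_\ell$ count how many shifts occur between consecutive deletions.

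\emph{Base case $q=1$.} Here $\mathbf n=(n_1)$ and $g_1=n_1$. We apply \eqref{CoordinateShift} $n_1$ times with $q=1$, which is an induction on $n_1$. This gives
$$
R^{1,\infty}(n_1)=\Phi_2^{n_1}\bigl(R^{1,\infty}(0)\bigr).
$$
By \eqref{CoordinateZero} with $q=1$, where the empty argument is omitted, $R^{1,\infty}(0)=\iota_2(R^{0,\infty})$. Together these give \eqref{GapRepresentationFormula} for $q=1$.

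\emph{Inductive step.} Let $q\geq2$ and assume the formula holds for $q-1$. Fix $0\leq n_1\leq\cdots\leq n_q$.

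First, apply \eqref{CoordinateShift} $n_1$ times, each time to the tuple $(j,n_2-n_1+j,\ldots,n_q-n_1+j)$ for $j=0,\ldots,n_1-1$. Each such tuple is still nondecreasing with nonnegative entries, so the lemma applies at every step. This yields
$$
R^{q,\infty}(n_1,\ldots,n_q)=\Phi_{q+1}^{g_1}\bigl(R^{q,\infty}(0,n_2-n_1,\ldots,n_q-n_1)\bigr).
$$

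Second, apply \eqref{CoordinateZero} to rewrite the right-hand argument as $\iota_{q+1}\bigl(R^{q-1,\infty}(n_2-n_1,\ldots,n_q-n_1)\bigr)$.

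Third, apply the induction hypothesis to the shifted tuple $\mathbf n'=(n_2-n_1,\ldots,n_q-n_1)$. Its gaps are $g'_1=n_2-n_1=g_2$ and $g'_\ell=g_{\ell+1}$ for $\ell\geq 2$. The hypothesis therefore gives
$$
\Phi_q^{g_2}\circ\iota_q\circ\cdots\circ\Phi_2^{g_q}\circ\iota_2(R^{0,\infty}).
$$
Substituting this back completes the composition in \eqref{GapRepresentationFormula}.

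There is no real obstacle beyond bookkeeping. Two points need checking. The dimensions of the operators must match: $\Phi_r$ and $\iota_r$ act on $\mathfrak R^{r-1}$ or $\mathfrak R^r$, and $R^{q,\infty}$ has $q+1$ coordinates, so it lies in $\mathfrak R^{q+1}$. Also, repeated rounds (zero gaps) must be allowed, which holds because \eqref{CoordinateShift} was stated for weakly increasing tuples.
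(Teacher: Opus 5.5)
Your proof is correct and is essentially the paper's argument: both iterate the two identities of Lemma \ref{CoordinateRecursion}. The paper builds the composition from the innermost $\Phi_2^{g_q}\circ\iota_2$ outward, while you induct on $q$ by peeling off the outermost block $\Phi_{q+1}^{g_1}\circ\iota_{q+1}$ and applying the hypothesis to $(n_2-n_1,\ldots,n_q-n_1)$, making explicit the gap bookkeeping $g'_\ell=g_{\ell+1}$ that the paper leaves to ``continuing in this way.''
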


\begin{proof}
Read the composition from right to left. The first insertion and shift give
$$
(\Phi_2^{g_q}\circ\iota_2)(R^{0,\infty})
=
R^{1,\infty}(g_q).
$$
After the next insertion and shift, the two selected rounds are
$g_{q-1}$ and $g_{q-1}+g_q$. Continuing in this way, after processing
$g_h,\ldots,g_q$ the selected rounds are
$$
g_h,\quad g_h+g_{h+1},\quad\ldots,\quad g_h+\cdots+g_q.
$$
The final selected rounds are therefore
$$
(g_1,g_1+g_2,\ldots,g_1+\cdots+g_q)
=
(n_1,\ldots,n_q).
$$
Zero gaps allow repeated rounds.
\end{proof}

Set
$$
Q\coloneqq2\max_{i\in I}|A_i|+1.
$$
For $n,n'\in\naturals$, write
\begin{equation}\label{PowerEquivalence}
n\equiv_Q n'
\quad\Longleftrightarrow\quad
\Phi_r^n=\Phi_r^{n'}
\text{ as maps on }\mathfrak R^r
\text{ for every }1\leq r\leq Q.
\end{equation}
Let $[n]_Q$ denote the equivalence class of $n$.

\begin{lemma}[Finite power signatures]\label{FiniteSignatures}
The relation $\equiv_Q$ has finitely many equivalence classes. Moreover,
$$
n\equiv_Q n'
\quad\Longrightarrow\quad
n+h\equiv_Q n'+h
\quad\text{for every }h\in\naturals.
$$
\end{lemma}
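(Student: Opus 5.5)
The plan is to reduce the statement to the elementary fact that a map from a finite set to itself generates a monoid of powers that is eventually periodic. For each $1\leq r\leq Q$, the set $\mathfrak R^r$ is finite, because $\mathcal B_i^r$ is finite for every player and the product over the finitely many players of the nonempty subsets of $\mathcal B_i^r$ is finite. Consequently, the space of self-maps of $\mathfrak R^r$ is finite, and so is the product
\[
\mathcal F\coloneqq\prod_{r=1}^{Q}\left(\mathfrak R^r\right)^{\mathfrak R^r}.
\]
Consider the sequence $F_n\coloneqq(\Phi_1^n,\ldots,\Phi_Q^n)\in\mathcal F$ for $n\in\naturals$, with $\Phi_r^0$ the identity. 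By definition \eqref{PowerEquivalence}, $n\equiv_Q n'$ holds exactly when $F_n=F_{n'}$. Hence the equivalence classes of $\equiv_Q$ are in bijection with the image $\{F_n:n\in\naturals\}\subseteq\mathcal F$, which is finite. This proves the first assertion.

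For the second assertion, the observation I will use is that $F_{n+h}$ is obtained from $F_n$ by composing componentwise with $F_h$: for every $r\leq Q$,
\[
\Phi_r^{n+h}=\Phi_r^h\circ\Phi_r^n.
\]
If $n\equiv_Q n'$, then $\Phi_r^n=\Phi_r^{n'}$ for every $r\leq Q$. It follows that $\Phi_r^{n+h}=\Phi_r^h\circ\Phi_r^{n}=\Phi_r^h\circ\Phi_r^{n'}=\Phi_r^{n'+h}$ for every $r\leq Q$, that is, $n+h\equiv_Q n'+h$. This step is purely algebraic and does not depend on the form of $\Phi_r$; in particular, it needs no measurability or continuity properties of $\BR_i$.

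I expect no genuine obstacle. The only points to check are the following. First, each $\Phi_r$ is a well-defined self-map of $\mathfrak R^r$: every $\Phi_r(D)_i$ is a nonempty subset of $\mathcal B_i^r$, because $K\times D_{-i}$ is finite and nonempty, so it carries a belief, and $\BR_i$ always returns a nonempty set. Second, the family of indices $r$ is finite because $Q$ is a fixed finite bound. Optionally, I would add that pigeonhole on $(F_n)_n$ yields a preperiod $n_0$ and a period $c$ with $F_{n+c}=F_n$ for all $n\geq n_0$. This describes the classes explicitly: $\{0\},\ldots,\{n_0-1\}$, together with residues modulo $c$ above $n_0$. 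It is the form that will be useful later, when gap lengths in Lemma \ref{GapRepresentation} are replaced by their classes $[g_\ell]_Q$.
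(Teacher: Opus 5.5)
Your proof is correct and follows the paper's argument: finiteness of each $\mathfrak R^r$ for $r\leq Q$ leaves only finitely many possible tuples of powers, and the identity $\Phi_r^{n+h}=\Phi_r^h\circ\Phi_r^n$ gives the shift invariance. Your extra checks are fine but not needed: that each $\Phi_r$ is a well-defined self-map, and the optional preperiod/period description, which is exact when $n_0$ and $c$ are chosen minimally.
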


\begin{proof}
For every $r\leq Q$, the set $\mathfrak R^r$ is finite. Hence the powers of $\Phi_r$ take only finitely many values as maps on $\mathfrak R^r$. This proves the first assertion. For the second,
$$
\Phi_r^{n+h}
=
\Phi_r^h\circ\Phi_r^n
=
\Phi_r^h\circ\Phi_r^{n'}
=
\Phi_r^{n'+h}.
$$
\end{proof}

For every weakly increasing vector
$\mathbf n=(n_1,\ldots,n_q)$, define its \emph{gap signature} by
$$
\operatorname{gap}_Q(\mathbf n)
\coloneqq
\left([n_1]_Q,([n_\ell-n_{\ell-1}]_Q)_{\ell=2}^q\right).
$$
If $q+1\leq Q$ and
$\operatorname{gap}_Q(\mathbf n)
=\operatorname{gap}_Q(\hat{\mathbf n})$, Lemma \ref{GapRepresentation} and the definition of $\equiv_Q$ imply
\begin{equation}\label{EqualGapCoordinates}
R^{q,\infty}(\mathbf n)
=
R^{q,\infty}(\hat{\mathbf n}).
\end{equation}

We now identify the finite coordinates that determine a decreasing hierarchy. Let
$x_i=(x_{i,n})_{n\geq0}\in\mathcal B_i^\mathbb N$ be decreasing, and write $x_{i,\infty}\coloneqq\bigcap_nx_{i,n}$. Suppose its strict transitions occur at rounds
$$
1\leq b_1<\cdots<b_d,
$$
and set $b_0=0$ and
$e_h=b_{h+1}-1$ for $0\leq h<d$.
Thus $x_i$ is constant on each interval $[b_h,e_h]$ for $h<d$ and at every round $n\geq b_d$. Define
$$
\mathbf c(x_i)
\coloneqq
(b_0,e_0,b_1,e_1,\ldots,b_{d-1},e_{d-1},b_d),
$$
with the convention $\mathbf c(x_i)=(b_0)$ when $d=0$.

\begin{lemma}[Critical coordinates]\label{CriticalCoordinates}
Let $c_1,\ldots,c_{2d+1}$ be the entries of
$\mathbf c(x_i)$. Then
$$
x_i\in\mathcal S_i
$$
if and only if
$$
(x_{i,c_1},\ldots,x_{i,c_{2d+1}},x_{i,\infty})
\in
R_i^{2d+1,\infty}
(c_1,\ldots,c_{2d+1}).
$$
\end{lemma}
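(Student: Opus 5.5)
The forward direction is immediate from the definitions, and for the converse I would take a hierarchy $s_i\in\mathcal S_i$ that matches $x_i$ at the critical coordinates and show $s_i=x_i$ using only the monotonicity in Claim \ref{Monoghier}. No best-reply argument is needed beyond that claim.

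\emph{Forward direction.} If $x_i\in\mathcal S_i$, then by the definition $R_i^{2d+1,\infty}(c_1,\ldots,c_{2d+1})=\proj_i^{\mathbf c,\infty}(\mathcal S_i)$ the tuple $(x_{i,c_1},\ldots,x_{i,c_{2d+1}},x_{i,\infty})$ is the image of $x_i$ and lies in this set. The entries of $\mathbf c(x_i)$ are weakly increasing, since $b_h\le e_h=b_{h+1}-1<b_{h+1}$, so the projection is well defined. When $d=0$ the vector is $(b_0)=(0)$ and the same reasoning applies.

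\emph{Converse.} Suppose the tuple lies in $R_i^{2d+1,\infty}(\mathbf c)$. Then there is some $s_i\in\mathcal S_i$ with
\[
s_{i,b_h}=x_{i,b_h},\quad s_{i,e_h}=x_{i,e_h}\ (h<d),\quad s_{i,b_d}=x_{i,b_d},\quad s_{i,\infty}=x_{i,\infty}.
\]
By Claim \ref{Monoghier}, $s_i$ is decreasing, and I would show $s_i=x_i$ block by block.

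\begin{itemize}
\item For $h<d$, $x_i$ is constant on $[b_h,e_h]$, so $s_{i,b_h}=x_{i,b_h}=x_{i,e_h}=s_{i,e_h}$. For $b_h\le n\le e_h$, monotonicity gives $s_{i,e_h}\subseteq s_{i,n}\subseteq s_{i,b_h}$. Both bounds coincide, so $s_{i,n}=x_{i,n}$.
\item On the final block, $x_i$ is constant from $b_d$ on, so $x_{i,\infty}=x_{i,b_d}$. Hence $s_{i,b_d}=x_{i,b_d}=x_{i,\infty}=s_{i,\infty}$. For every $n\ge b_d$ we have $s_{i,\infty}\subseteq s_{i,n}\subseteq s_{i,b_d}$, because $s_{i,\infty}$ is the intersection of a decreasing sequence. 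Therefore $s_{i,n}=x_{i,n}$.
\end{itemize}

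The blocks $[b_0,e_0],\ldots,[b_{d-1},e_{d-1}],[b_d,\infty)$ cover $\naturals$, so $x_i=s_i\in\mathcal S_i$.

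The only step requiring care is the tail. There, agreement at a single finite coordinate $b_d$ is not enough; the limit coordinate $x_{i,\infty}$ in $R^{q,\infty}$ is what pins down $s_i$ on the infinitely many rounds after the last transition. I expect no real obstacle beyond checking these index conventions, including the degenerate case $d=0$.
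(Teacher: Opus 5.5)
Your proof is correct and is essentially the paper's argument. The forward direction is the definition of $R_i^{2d+1,\infty}$, and the converse sandwiches the decreasing hierarchy $s_i$ between its equal values at $b_h$ and $e_h$ on each finite constant block, and between $s_{i,b_d}$ and $s_{i,\infty}$ on the final block; your extra checks (weak monotonicity of $\mathbf c(x_i)$ and the case $d=0$) are details the paper leaves implicit.
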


\begin{proof}
The forward implication follows from the definition of $R_i^{2d+1,\infty}$. Conversely, suppose the displayed tuple is generated by some $s_i\in\mathcal S_i$. Both $s_i$ and $x_i$ are decreasing. On every finite constant interval,
$$
s_{i,b_h}=x_{i,b_h}=x_{i,e_h}=s_{i,e_h},
$$
and both sequences decrease, so every coordinate between these equal endpoints has their common value. Thus
$s_{i,n}=x_{i,n}$ for every $b_h\leq n\leq e_h$.
Similarly,
$s_{i,b_d}=x_{i,b_d}=x_{i,\infty}=s_{i,\infty}$ gives equality at every $n\geq b_d$. Hence $s_i=x_i$.
\end{proof}

Consider a prefix
$s_i^m=(s_{i,0},\ldots,s_{i,m})\in\mathcal S_i^m$.
Let
$$
0=b_0<b_1<\cdots<b_L\leq m
$$
be the first rounds of its successive constant blocks, and set
$e_h=b_{h+1}-1$ for $0\leq h<L$.
Define its \emph{prefix signature} by
\begin{equation}\label{PrefixCode}
\operatorname{sig}_i(s_i^m)
\coloneqq
\left(
(s_{i,b_h})_{h=0}^L,\,
([e_h-b_h]_Q)_{h=0}^{L-1},\,
[m-b_L]_Q
\right).
\end{equation}

\begin{lemma}[Prefix signatures]\label{PrefixSignature}
If
$$
\operatorname{sig}_i(s_i^m)
=
\operatorname{sig}_i(\hat s_i^{\hat m}),
$$
then
$$
\mathcal T_i(s_i^m)
=
\mathcal T_i(\hat s_i^{\hat m}).
$$
\end{lemma}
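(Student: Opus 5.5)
The plan is to reduce membership of a complete hierarchy in $\mathcal S_i$ to a condition that depends on the prefix only through its signature. We do this by combining Lemma~\ref{CriticalCoordinates} with the gap-invariance \eqref{EqualGapCoordinates}. By symmetry it suffices to show $\mathcal T_i(s_i^m)\subseteq\mathcal T_i(\hat s_i^{\hat m})$.

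\emph{Setting up the correspondence.} Take a tail $y=(y_0,y_1,\ldots)\in\mathcal T_i(s_i^m)$. It comes from $x_i\in\mathcal S_i$ with $x_i^m=s_i^m$ and $x_{i,m+n}=y_n$, so $y_0=s_{i,m}$. Equality of signatures gives equal block values $(s_{i,b_h})_h=(\hat s_{i,\hat b_h})_h$. In particular $\hat s_{i,\hat m}=s_{i,\hat b_L}=s_{i,m}=y_0$. Define $\hat x_i$ by $\hat x_i^{\hat m}=\hat s_i^{\hat m}$ and $\hat x_{i,\hat m+n}=y_n$. This is a well-defined decreasing sequence: the prefix is decreasing by Claim~\ref{Monoghier}, the tail is decreasing because $x_i$ is, and the two agree at the junction. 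It remains to show $\hat x_i\in\mathcal S_i$, which gives $y\in\mathcal T_i(\hat s_i^{\hat m})$.

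\emph{Comparing critical coordinates.} Let $j\geq0$ be the number of rounds after $m$ for which the tail keeps the value $y_0$. Set $j=\infty$ if it never changes, in which case block $L$ is the final block. Let $m<b'_1<\cdots$ be the later transition rounds of $x_i$. The critical vector $\mathbf c(x_i)$ then consists of three parts:
\begin{itemize}
\item the prefix pairs $(b_h,e_h)$ for $h<L$;
\item the pair $(b_L,\,m+j)$ when $j<\infty$, or the single entry $b_L$ when $j=\infty$;
\item the tail transitions and block ends, all of the form $m+c$ with $c$ determined by $y$ alone.
\end{itemize}
The same description holds for $\mathbf c(\hat x_i)$ with $(\hat b_h,\hat e_h,\hat m)$ and the same $j$ and tail offsets $c$. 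In particular both vectors have the same length $2d+1$. Since $d\leq|A_i|-1$, we get $2d+2\leq Q$, so \eqref{EqualGapCoordinates} is applicable.

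\emph{Matching gap signatures.} We check the successive gaps one by one.
\begin{itemize}
\item The first entry is $0$ in both vectors.
\item The within-block gaps $e_h-b_h$, for $h<L$, have equal $\equiv_Q$-classes by assumption.
\item The between-block gaps $b_{h+1}-e_h$ all equal $1$.
\item All gaps lying entirely inside the tail are literally identical.
\item The straddling gap is $(m-b_L)+j$ versus $(\hat m-\hat b_L)+j$. The hypothesis gives $[m-b_L]_Q=[\hat m-\hat b_L]_Q$, and the shift-invariance in Lemma~\ref{FiniteSignatures} yields equality of the classes after adding $j$. If $j=\infty$ there is no straddling gap, and the vector ends at $b_L$ and $\hat b_L$ respectively.
\end{itemize}
Hence $\operatorname{gap}_Q(\mathbf c(x_i))=\operatorname{gap}_Q(\mathbf c(\hat x_i))$, and \eqref{EqualGapCoordinates} gives
$$
R_i^{2d+1,\infty}(\mathbf c(x_i))=R_i^{2d+1,\infty}(\mathbf c(\hat x_i)).
$$

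\emph{Conclusion.} The value tuples at the critical coordinates coincide. The prefix entries are block values, which are equal. The straddling block has value $y_0$ in both. The tail entries and the limit $x_{i,\infty}=\hat x_{i,\infty}=\bigcap_n y_n$ are identical. By the forward direction of Lemma~\ref{CriticalCoordinates} applied to $x_i$, this common tuple lies in the common $R$-set. By the converse direction, $\hat x_i\in\mathcal S_i$.

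The main obstacle is the bookkeeping of the block that straddles round $m$, since it is the only place where prefix and tail lengths mix. The additivity part of Lemma~\ref{FiniteSignatures} is exactly what handles it. A secondary check is that $q+1\leq Q$ holds for every critical vector, including the degenerate cases $L=0$ and $j=\infty$.
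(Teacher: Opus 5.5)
Your proof is correct. It follows the paper's overall strategy: append a common tail to both prefixes, show the relevant coordinate vectors have equal gap signatures, apply \eqref{EqualGapCoordinates}, and conclude with Lemma~\ref{CriticalCoordinates}.

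The one genuine difference is how you handle the block straddling the cut.

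\emph{The paper's route.} It inserts $m$ (resp.\ $\hat m$) as an extra coordinate immediately after $b_L$. The straddling gap then splits into $m-b_L$, whose class the signature matches, and a post-cut gap that is literally identical. No lengths ever need to be added. The cost is one extra finite coordinate, which is why it needs $2d+3\le Q$ (the $+1$ in $Q$). It must also remark that the inserted coordinate does not affect the criterion of Lemma~\ref{CriticalCoordinates}.

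\emph{Your route.} You keep the plain critical vector, so the straddling gap is $(m-b_L)+j$. You then use the translation invariance $n\equiv_Q n'\Rightarrow n+h\equiv_Q n'+h$ from Lemma~\ref{FiniteSignatures}, which the paper records but does not use in its own argument. This lets you apply Lemma~\ref{CriticalCoordinates} verbatim, and you only need $2d+2\le Q$.

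One small slip: the chain establishing the junction value should read $\hat s_{i,\hat m}=\hat s_{i,\hat b_L}=s_{i,b_L}=s_{i,m}=y_0$.
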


\begin{proof}
The equality of signatures implies that the two prefixes visit the same chain of action sets. Let
$$
0=b_0<\cdots<b_L\leq m,
\qquad
0=\hat b_0<\cdots<\hat b_L\leq\hat m
$$
be the first rounds of their corresponding constant blocks, and set
$e_h=b_{h+1}-1$ and $\hat e_h=\hat b_{h+1}-1$ for $h<L$.

It is enough to show that every candidate tail extends both prefixes or neither. Let
$\tau_i=(\tau_{i,0},\tau_{i,1},\ldots)\in\mathcal B_i^\mathbb N$ be any candidate continuation with
$\tau_{i,0}=s_{i,m}=\hat s_{i,\hat m}$, and form
$$
x_i
=
(s_{i,0},\ldots,s_{i,m},\tau_{i,1},\tau_{i,2},\ldots),
$$
$$
\hat x_i
=
(\hat s_{i,0},\ldots,\hat s_{i,\hat m},
\tau_{i,1},\tau_{i,2},\ldots).
$$
The sequence $x_i$ is decreasing if and only if $\hat x_i$ is decreasing. If they are not decreasing, neither belongs to $\mathcal S_i$.

Suppose they are decreasing. Since the prefixes visit the same chain of action sets and the candidate tail is the same, $x_i$ and $\hat x_i$ have the same number $d$ of strict transitions.

For a decreasing sequence $z_i$ and a cut round $c$, let
$\mathbf c_c(z_i)$ be obtained from $\mathbf c(z_i)$ by inserting $c$ immediately after the first round of the constant block containing it. Repeated rounds are retained when $c$ is already the first or last round of that block. Set
$$
\mathbf a=\mathbf c_m(x_i),
\qquad
\hat{\mathbf a}=\mathbf c_{\hat m}(\hat x_i).
$$
These vectors have the same length, and
\begin{equation}\label{EqualCutCoordinates}
\proj_i^{\mathbf a,\infty}(x_i)
=
\proj_i^{\hat{\mathbf a},\infty}(\hat x_i),
\end{equation}
because the two prefixes visit the same action sets and the continuation after their respective cuts is the same.

We next compare their gaps. Before the cut, each gap is the count
$e_h-b_h$ across a completed constant block, the one-round gap
$b_{h+1}-e_h=1$ between consecutive blocks, or the count
$m-b_L$ from the start of the current block to the cut. The corresponding gaps for the second prefix are
$\hat e_h-\hat b_h$, $1$, and $\hat m-\hat b_L$. Equation \eqref{PrefixCode} assigns the same $\equiv_Q$ class to the first and third kinds, while the second kind is identical. Every gap after the cut is identical because the candidate tail is the same. Hence
\begin{equation}\label{EqualCutGaps}
\operatorname{gap}_Q(\mathbf a)
=
\operatorname{gap}_Q(\hat{\mathbf a}).
\end{equation}

The augmented vectors have at most $2d+2$ finite coordinates. Since $d\leq|A_i|-1$, their number of coordinates including the limit coordinate is at most
$$
2d+3
\leq
2|A_i|+1
\leq Q.
$$
Equations \eqref{EqualGapCoordinates} and \eqref{EqualCutGaps} therefore give
$$
R^{q,\infty}(\mathbf a)
=
R^{q,\infty}(\hat{\mathbf a}),
$$
where $q$ is the common number of finite coordinates.

Adding the cut round does not change the criterion in Lemma \ref{CriticalCoordinates}. If a hierarchy generates the augmented tuple, deleting the cut coordinate gives the critical tuple, and the lemma implies that the hierarchy coincides with the candidate at every round. Combining this fact with \eqref{EqualCutCoordinates} and the equality of the two coordinate sets gives
$$
x_i\in\mathcal S_i
\quad\Longleftrightarrow\quad
\hat x_i\in\mathcal S_i.
$$
The candidate tail belongs to both continuation sets or to neither.
\end{proof}

\paragraph{Proof of Theorem \ref{FiniteO}.}
A strict chain of nonempty subsets of $A_i$ has at most $|A_i|-1$ strict transitions. There are finitely many such chains. The relation $\equiv_Q$ has finitely many classes, and every prefix signature contains a bounded number of these classes. Hence there are finitely many prefix signatures. Lemma \ref{PrefixSignature} shows that the continuation set of a prefix depends only on its signature. Therefore $\Omega_i$ is finite.

\end{document}